\newcommand{\calD}{{\mathcal D}}
\newcommand{\calM}{{\mathcal M}}
\newcommand{\calN}{{\mathcal N}}
\newcommand{\calX}{{\mathcal X}}
\newcommand{\bbC}{{\mathbb C}}
\newcommand{\bbR}{{\mathbb R}}
\newcommand{\bbZ}{{\mathbb Z}}
\newcommand{\brk}[1]{\left(#1\right)}          % \brk{.}     => (.)
\newcommand{\Brk}[1]{\left[#1\right]}          % \Brk{.}     => [.]
\newcommand{\BRK}[1]{\left\{#1\right\}}        % \BRK{.}     => {.}
\newcommand{\mymat}[1]{\begin{pmatrix} #1 \end{pmatrix}}
\newcommand{\deriv}[2]{\frac{d#1}{d#2}}
\newcommand{\pd}[2]{\frac{\partial#1}{\partial#2}}
\newcommand{\beq}{\begin{equation}}
\newcommand{\eeq}{\end{equation}}
\newcommand{\bsplit}{\begin{split}}
\newcommand{\esplit}{\end{split}}
\newcommand{\baligned}{\begin{aligned}}
\newcommand{\ealigned}{\end{aligned}}
\newcounter{sect}
\providecommand{\R}{\bbR}
\newcommand{\textand}{\quad\text{ and }\quad}
\newcommand{\Textand}{\qquad\text{ and }\qquad}
\newtheorem{theorem}{Theorem}[section]
\newtheorem{lemma}{Lemma}[section]
\newtheorem{proposition}{Proposition}[section]
\newtheorem{definition}{Definition}[section]
\newenvironment{proof}{{\flushleft \emph{Proof}:}}{\hfill\ding{110}}
\newcommand{\g}{\mathfrak{g}}
\newcommand{\euc}{\mathfrak{e}}
\newcommand{\cof}[1]{\vartheta^#1}
\newcommand{\conf}{\varrho}
\newcommand{\M}{{\calM}}
\newcommand{\tM}{\tilde{\M}}
\newcommand{\connEuc}{\nabla^{\euc}}
\newcommand{\connM}{\nabla^{\M}}
\newcommand{\conntM}{\nabla^{\tM}}
\newcommand{\tp}{\tilde{p}}
\newcommand{\tPi}{\tilde{\Pi}}
\newcommand{\tg}{\tilde{\gamma}}
\newcommand{\vp}{\varphi}
\newcommand{\Aff}{\operatorname{Aff}}
\newcommand{\dev}{\operatorname{dev}}
\newcommand{\dist}{\operatorname{dist}}
\newcommand{\End}{\operatorname{End}}
\newcommand{\Deck}{\operatorname{deck}}
\newcommand{\Int}{\operatorname{Int}}
\newcommand{\SO}[1]{\text{SO}(#1)}
\newcommand{\tdev}{\dev_\bbC}
\newcommand{\tconf}{{\conf_\bbC}}
\newcommand{\Comega}[2]{{\omega^#1}_#2}
\newcommand{\COmega}[2]{{\Omega^#1}_#2}
\begin{document}

\title{Metric Description of Defects in Amorphous Materials\thanks{RK was partially funded by the Israel Science Foundation and by the Israel-US Binational Foundation.
JPS  was partially supported by the Israel Science Foundation grant 1321/2009 and the Marie Curie International Reintegration Grant No. 239381.}
}

\author{Raz Kupferman  \\
Institute of Mathematics \\
The Hebrew University \\
 Jerusalem 91904, Israel \\
              \\
                      Michael Moshe  \\
                      Racah Institute of Physics \\ The Hebrew University \\ Jerusalem 91904, Israel \\
\\
        Jake P. Solomon \\
                      Institute of Mathematics \\ The Hebrew University \\Jerusalem 91904, Israel
}

\date{October 2013}

\maketitle

\begin{abstract}
Classical elasticity is concerned with bodies that can be modeled as smooth manifolds endowed with a reference metric that represents local equilibrium distances between neighboring material elements. The elastic energy associated with a configuration of a body in classical elasticity is the sum of local contributions that arise from a discrepancy between the actual metric and the reference metric. In contrast, the modeling of defects in solids has traditionally involved extra structure on the material manifold, notably torsion to quantify the density of dislocations and non-metricity to represent the density of point defects.  We show that all the classical defects can be described within the framework of classical elasticity using tensor fields that only assume a metric structure.  Specifically, bodies with singular defects can be viewed as affine manifolds; both disclinations and dislocations are  captured by the monodromy that maps curves that surround the loci of the defects into affine transformations. Finally,  we show that two dimensional defects with trivial monodromy are purely local in the sense that if we remove from the manifold a compact set that contains the locus of the defect, the punctured manifold can be isometrically embedded in Euclidean space.
\end{abstract}

%%%%%%%%%%%%%%%%%%%%%%%%%%%%%%%%%%%%%%%%%%%%%
\section{Introduction}
\subsection{Background}

The study of defects in solids and the mechanical properties of  solids with imperfections is a longstanding theme in material science. There exists a wide range of prototypical crystalline defects, among which are dislocations, disclinations, and point defects. An influential dogma since the 1950s has been to describe defects in solids using differential geometric tools. That is, intermolecular effects are smeared out and the defective solid is modeled as a smooth differentiable manifold equipped with a structure that captures its intrinsic geometry, and notably the defects.

The pioneers in the adoption of geometric fields to describe defects in crystalline solids were Kondo \cite{Kon55},  Bilby et al. \cite{BBS55,BS56}, and  Kr\"oner  \cite{Kro81}. Their approach was motivated by the common practice in crystallography---\emph{Burgers circuits}, that are based on discrete steps with respect to a local crystalline structure, and the identification of defects with a discrepancy between closed loops in real space and closed loops in the discrete crystallographic space.
The theory of Kondo and Bilby connects this crystallographic practice to a general theory of continuously dislocated crystals. Its central pillar is a notion of \emph{parallelism}---it is assumed that one knows how to translate vectors in parallel from one point in the body to another. This premise reflects the existence of an underlying lattice, and parallelism is defined such that the components of the translated vector remain constant with respect to the lattice axes.

For defects of dislocation-type, Bilby et al. assume a property of \emph{distant parallelism}, which amounts to the assumption that a  lattice can be defined globally far enough from the locus of the defect, so that the parallel transport of vectors with respect to this lattice is path-independent.
Mathematically, their assumption amounts to the postulation of a global \emph{frame field}, which automatically defines a connection  with respect to which this frame field is parallel. This connection is the main geometric structure used by Kondo and Bilby et al. to describe the state of a dislocated crystal.

A relation between material properties and a material connection was rigorously defined in Wang \cite{Wan67}.
Wang's point of view is that the mechanical properties of a solid are fully encoded in its constitutive relations. In particular, these relations define a collection of symmetry-preserving maps between tangent spaces at different points. Intuitively, this collection of maps determines in what sense a material element at one point is the ``same" as a material elements at another point.  Mathematically, this collection of maps can be identified with a family of sections of the principal bundle of frame fields. Each such section defines a connection. If the connection is independent of the chosen section,  the material symmetries are said to be associated with a \emph{material connection}. As Wang shows, a material connection exists only  in the case of discrete symmetries.

Every connection carries two associated geometric fields---\emph{curvature} and \emph{torsion}. A non-vanishing curvature implies that vectors transported in parallel along closed loops do not regain their initial value. For a connection that admits a parallel
 frame field, the curvature is always zero, and the manifold is then said to be \emph{flat}. A non-vanishing torsion is associated with an asymmetry in the law of parallel transport, or with the failure of parallelograms to be closed.  Torsion has been traditionally associated with the presence of non-zero Burgers vectors, and more specifically, as a measure for the density of dislocations (see the recent paper of Ozakin and Yavari \cite{OY12} for a modern derivation of the relation between the two). Thus, a dislocated crystal is commonly described as a manifold equipped with a connection that has zero curvature but non-zero torsion, a manifold known as a \emph{Weitzenb\"ock manifold}.

Another thoroughly-studied type of defect is a \emph{disclination}.
Like  dislocations, disclinations are detected by the parallel transport of vectors. If dislocations are translational defects, disclinations are rotational defects. A vector translated in parallel along a closed loop that surrounds a disclination line does not return to its initial value. The geometric interpretation of this effect is the presence of a non-zero curvature. There is in fact a close analogy between dislocations and disclinations: dislocations are measured in terms of a translational Burgers  vector whereas disclinations are measured in terms of an angular \emph{Franck vector}. In a dislocated solid, torsion measures the Burgers vector density, whereas in a disclinated body, curvature measures the density of the Franck vector.
See Romanov, Derezin and Zubov \cite{DZ11} and Yavari and Goriely \cite{YG12c} for recent works on the modeling of disclinations in solids.

A third type of defect is a \emph{point defect}. The two prototypical examples of a point defect are a \emph{vacancy}, in which material is removed and the resulting hole is ``welded", and an \emph{intersticial}, in which extra material is inserted at a point. Point defects have also been described using differential geometric fields, and in this context, Kr\"oner introduced the notion of \emph{non-metricity} \cite{Kro92}---a solid with point defects is modeled as a manifold equipped with a connection and a metric, but the connection is not compatible with the metric, i.e., parallel transport does not preserve the Riemannian inner-product. See Katanaev and Volovick \cite{KV92}, Miri and Rivier \cite{MR02}, and Yavari and Goriely \cite{YG12b} for more recent work.

The description of combinations of the above mentioned defect types has been addressed by several authors. In particular, Katanaev and Volovich \cite{KV92}, consider all three defect types.
Miri and Rivier \cite{MR02} also consider  different types of topological defects, and raise two questions that are especially relevant to the present work: (i) can dislocations and disclinations co-exist? and (ii) do dislocations and disclinations ``survive" in amorphous  solids where there is no lattice to dislocate? Both questions will be addressed below.

Finally, we mention the recent series of papers by Yavari and Goriely \cite{YG12,YG12c,YG12b}, which present a very comprehensive description of the Cartan moving frame formalism, and its uses in the modeling of the three types of defects. In addition to a  systematic use of the Riemann-Cartan formalism to simplify calculations in both Riemannian and non-Riemannian manifolds, these papers make a  distinctive contribution: for each type of defect they solve specific yet representative examples, and calculate residual stresses within the framework of nonlinear elasticity.

\subsection{Outline of results}

The present work is motivated by the recent growing interest in amorphous materials that have a non-trivial intrinsic geometry.
There is  a wealth of recent work on pattern formation in living tissues such as
leaves \cite{LM09}, fungi \cite{DB08}, flowers \cite{FSDM05,MP06}, pines \cite{DVR97},  seed pods  \cite{AESK11}, and isolated cells \cite{AAESK12}. There is also a growing literature on engineered soft materials, e.g., \cite{KES07,WMGTNSK13}.
These tissues can be attributed a natural metric structure, and are essentially isotropic.

The renewed interest in amorphous materials endowed with a non-trivial geometry leads us to reconsider whether there can be defects in a medium that is inherently disordered. This is especially pressing in light of Wang's conclusion that a material connection can be defined unambiguously only in the presence of discrete symmetries. Even before that, we should clarify what makes a solid defect-free.

As our basic model for an isotropic defect-free elastic medium we take a Riemannian manifold
that can be embedded isometrically in three-dimensional Euclidean space. Dislocations, disclinations and point defects are in a sense ``elementary defects", in which the defect-free structure is destroyed in clearly delineated regions (typically lines or points). This observation motivates the definition of  elementary defects in isotropic media.
A body with singular defects is defined as a topological manifold $\calX$ and a closed subspace $\calD$, which we identify as the locus of the defect. Moreover,
$\M=\calX\setminus\calD$ is a endowed with a smooth structure and a Riemannian metric, such that
each point of $\M$ has a neighborhood that is defect-free in the above sense.
Note that this definition is in accordance with the concept of a defect proposed by Volterra more than a century ago \cite{Vol07}.

Our next step is to classify types of singular defects, and in particular, connect this  notion to the classical notions of dislocations, disclinations, and point defects. Moreover, we address the question raised by Miri and Rivier as to whether dislocations and disclinations can co-exist.

The differentiable manifold that consists of the body minus the locus of the defect can be viewed as an \emph{affine manifold} \cite{Ben60}, i.e., a differentiable manifold whose local charts are related to each other by affine transformations. As defects are of topological nature, this affine manifold often has a non-trivial topology. One of the basic tools for studying the topology of a manifold is the \emph{fundamental group}, whose  elements are the homotopy classes of closed loops. In our context, the homotopy class describes how a closed curve goes around the locus of the defect. For affine manifolds there is a well-known map from the fundamental group to affine transformations of Euclidean space. This map, which is a group homomorphism, is known as the \emph{monodromy} \cite{FGH81}.

As we will show, the monodromy encompasses both the notions of disclination and dislocation. The linear part of the affine transformation quantifies disclination, whereas the translational part quantifies dislocation. While this seems to imply, in response to the question raised by Miri and Rivier, that dislocations and disclinations can co-exist, the actual answer is more subtle. The subtlety arises from the necessity to choose a base point in defining the fundamental group. Let us focus on a given homotopy class, and assume the fundamental group is abelian.Then, taking into account the base-point dependence, the linear part of the monodromy can be described by a tensor field and the translation part can be described by a vector field. The tensor field describing the linear part is covariant constant. So, the essential information is contained in the tensor's value at a single point, which in the context of defects, is essentially the classical Franck vector. In contrast, the vector field describing the translation part of the monodromy is only covariant constant if the linear part is trivial. In the context of defects, this means that a dislocation can be quantified by a Burgers vector, i.e., by a quantity that depends on the properties of the defect but not on the point of reference, only in the absence of disclinations.

The structure of this paper is as follows: in Section~\ref{sec:singular_defects} we propose a definition for bodies with singular defects, and show that they can be viewed as affine manifolds. In Section~\ref{sec:monodromy} we describe the monodromy of an affine manifold as a map from its fundamental group to affine transformations of its tangent space. In particular, we relate the monodromy to disclinations and dislocations. In Section~\ref{sec:examples} we analyze in this context the classical defects---disclinations, screw dislocations, edge dislocations, higher-order defects, and point defects. For each example we perform an explicit calculation of the monodromy. Furthermore, we prove a general result for two-dimensional defects asserting that trivial monodromy implies a purely local defect. Finally, our results are summarized and interpreted in Section~\ref{sec:discussion}.

%%%%%%%%%%%%%%%%%%%%%%%%%%%%%%%%%%%%%%%%%%%%%

\section{Bodies with singular defects}
\label{sec:singular_defects}

\subsection{Definitions}

Our first step is to define what is a solid body with singular defects in a manner that only relies on its metric, and does not depend on the presence of a lattice structure, whether because there is no such structure or because it does not affect the mechanical properties of the body. We start by defining what  is a body free of defects. We formulate everything in arbitrary dimension $n$;  in most applications $n=3$.

%%%%%%%%%
\begin{definition}
A defect-free body is a connected, oriented, $n$-dimensional Riemannian manifold  $(\M,\g)$, possibly with boundary, that is  globally flat, i.e.,  there exists an isometric embedding,
\[
f : \M \to \R^n,
\]
where $\R^n$ is equipped with the standard Euclidean metric, $\euc$; that is, $\g = f^\star\euc$.
\end{definition}
%%%%%%%%

A few words about notations: let $f:\M\to\calN$ be a differentiable mapping between two manifolds. Then $df$ is a linear map  $T\M \to f^* T\calN$, where $f^*T\calN$ is a vector bundle over $\M$, with the fiber  $(f^*T\calN)_p$ identified with the fiber $T_{f(p)}\calN$.
Because of this canonical identification, if $F\to\M$ is a vector bundle and $\Phi: f^*T\calN \to F$, then for $p\in\M$ and $\xi \in T_{f(p)}\calN$, we can unambiguously apply $\Phi$ at $p$ to $\xi$, denoting the result by $\Phi_p(\xi)$.

Let  $\alpha\in\Gamma(\calN;T^*\calN)$ be a co-vector field on $\calN$. We denote by $f^*\alpha\in\Gamma(\M;f^* T^*\calN)$ its pullback  defined by,
\[
(f^*\alpha)_p(v) = \alpha_{f(p)}(v), \qquad p \in \M, \quad v\in T_{f(p)}\calN,
\]
where we applied the above canonical identification on the left hand side.
If we view $\alpha$ as a differential form rather than a section, then its pullback, which is denoted by $f^\star\alpha\in\Gamma(\M;T^*\M)$, is defined by
\[
(f^\star\alpha)_p(v) = \alpha_{f(p)}(df_p(v)), \qquad p \in \M, \quad v\in T_{p}\M,
\]
namely, $f^\star\alpha = (f^*\alpha)\circ df$. In this context, $\g = f^\star\euc$ is defined by
\[
\g(u,v) = (f^*\euc)(df(u),df(v)), \qquad u,v\in T\M.
\]
%%%%%%%%%%

Having defined a defect-free body, we proceed to define a body with singular defects.

%%%%%%%%%
\begin{definition}
\label{def:sing_def}
A body with singular defects is a connected  $n$-dimensional topological manifold $\calX$ and a (not-necessarily connected) closed subspace $\calD\subset \calX$ (called the locus of the defect), such that $\M = \calX\setminus \calD$ is connected and oriented. Moreover, $\M$ is endowed with a smooth structure and a (reference) metric $\g$ that is locally Euclidean: every point $p\in\M$ has an open neighborhood $(U_p,\g|_{U_p})$ that embeds isometrically in $(\R^n,\euc)$. Finally, there do not exist a smooth structure and Riemannian metric on $\calX$ such that the inclusion $\M \hookrightarrow \calX$ is an isometric embedding of Riemannian manifolds.
\end{definition}
%%%%%%%%

The last condition in Definition~\ref{def:sing_def} ensures the presence of a defect. In some cases,  there does exist a metric space structure on $\calX$ such that the inclusion $\M \hookrightarrow \calX$ is an isometric embedding of metric spaces.

\subsection{The affine structure}

A body with singular defects carries a natural \emph{affine structure}. Let $\{(U_\alpha,\vp_\alpha: U_\alpha\to\R^n)\}$ be an atlas of orientation-preserving isometries (which by Definition~\ref{def:sing_def} exists), namely,
\[
\g|_{U_\alpha} = \vp_\alpha^\star \euc.
\]
Let $(U_\alpha,\vp_\alpha)$ and $(U_\beta,\vp_\beta)$ be two local charts whose domains have a non-empty intersection, $V = U_\alpha\cap U_\beta$. Then,
\[
\g|_V = \vp_\alpha^\star \euc|_V = \vp_\beta^\star \euc|_V,
\]
which implies that,
\[
\vp_\beta \circ \vp^{-1}_\alpha: \vp_\alpha(V) \to \vp_\beta(V)
\]
is an isometry between two open sets in $\R^n$, namely a rigid transformation of the form
\[
A(x) = Lx + b,
\]
where $L\in \SO{n}$ and $b\in\R^n$.
A manifold endowed with an atlas such that all coordinate transforms are affine is called an \emph{affine manifold}. Bodies with singular defects form a subclass of affine manifolds in which all local chart transformations are rigid transformations.

Any affine structure comes with a natural connection $\connM$ that is induced from the standard connection $\connEuc$ in Euclidean space. Let $(U_\alpha,\vp_\alpha)$ be a local chart, then for $X,Y\in \Gamma(U_\alpha;TU_\alpha)$ we define,
\[
\connM_X Y = (\vp_\alpha^\star\connEuc)_X Y =
d\vp_\alpha^{-1} \brk{(\vp_\alpha^*\connEuc)_{X} d\vp_\alpha(Y)}.
\]
Note that  $d\vp_\alpha(Y)\in\Gamma(U_\alpha,\vp_\alpha^* T\R^n)$, hence
$(\vp_\alpha^*\connEuc)_{X} d\vp_\alpha(Y) \in \Gamma(U_\alpha,\vp_\alpha^* T\R^n)$,
so that the action of $d\vp_\alpha^{-1}:\vp^*_\alpha T\R^n\to T\calM$ is well-defined.

To show that the definition of the pullback connection is independent of the chosen chart, i.e., coincides on overlapping charts, let $\vp_\beta$ be another chart with the same domain, and let $\vp_{\alpha\beta} = \vp_\alpha\circ\vp_\beta^{-1}$ be the affine coordinate transformation.
Then
\[
\begin{split}
(\vp_\alpha^\star\connEuc)_X Y =
(\vp_\beta^\star \vp_{\alpha\beta}^\star \connEuc)_X Y = \vp_\beta^\star\connEuc,
\end{split}
\]
where in the last step we used the fact that
$\vp_{\alpha\beta}^\star \connEuc = \connEuc$, that is, affine transformations in Euclidean space preserve parallelism.

It follows from the definition of $\connM$ that for every local chart $(U_\alpha,\vp_\alpha)$,
\[
\partial^\alpha_i =
d\vp_\alpha^{-1}\brk{\vp_\alpha^* \pd{}{x^i}}
\]
is a parallel vector field.
To transport a vector in parallel within a given chart $(U_\alpha,\vp_\alpha)$
we write $v\in T_pU_\alpha$ as
\[
v = v^i \partial^\alpha_i|_p.
\]
Denoting by $\Pi_{p}^{q}: T_pU_\alpha \to T_qU_\alpha$  the parallel transport operator, we have
\[
\Pi_{p}^{q} (v) = v^i \partial^\alpha_i|_q.
\]

It is easy to see that the induced connection $\connM$ inherits the properties of the Euclidean connection:

%%%%%%%%%%
\begin{proposition}
$\connM$ is flat and torsion-free.
\end{proposition}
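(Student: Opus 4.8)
The plan is to exploit that both curvature and torsion depend $C^\infty$-linearly (tensorially) on their vector-field arguments, so that their vanishing may be checked on any convenient local frame. Since flatness and absence of torsion are local properties, and since $\connM$ coincides with the pullback $\vp_\alpha^\star\connEuc$ on each chart domain $U_\alpha$ (and is chart-independent, as shown above), it suffices to verify both assertions within a single chart $(U_\alpha,\vp_\alpha)$.

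The natural tool is the parallel frame $\{\partial^\alpha_i\}_{i=1}^n$ already constructed, which by definition satisfies $\connM_X\partial^\alpha_i=0$ for every vector field $X$ and every $i$. For flatness I would evaluate the curvature
\[
\Curv(X,Y)Z = \connM_X\connM_Y Z - \connM_Y\connM_X Z - \connM_{[X,Y]}Z
\]
on $Z=\partial^\alpha_i$: each of the three terms vanishes because $\connM_W\partial^\alpha_i=0$ for any $W$. As the $\partial^\alpha_i$ span $T_pU_\alpha$ at every point and $\Curv$ is tensorial in its last slot, this forces $\Curv\equiv 0$ on $U_\alpha$.

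For the torsion $T(X,Y)=\connM_X Y-\connM_Y X-[X,Y]$ I would test on the pair $(\partial^\alpha_i,\partial^\alpha_j)$. The first two terms vanish because the frame is parallel. For the bracket, observe that $d\vp_\alpha(\partial^\alpha_i)=\vp_\alpha^*(\partial/\partial x^i)$, so the $\partial^\alpha_i$ are precisely the coordinate vector fields of the chart $\vp_\alpha$; since coordinate fields commute and the Lie bracket is natural under the diffeomorphism $\vp_\alpha$, we get $[\partial^\alpha_i,\partial^\alpha_j]=0$. Hence $T(\partial^\alpha_i,\partial^\alpha_j)=0$, and by tensoriality $T\equiv 0$.

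A more conceptual alternative is to argue by naturality alone: the curvature and torsion of a connection pulled back along a diffeomorphism are the pullbacks of the original curvature and torsion, and since $\connEuc$ is the flat, torsion-free standard connection on $(\R^n,\euc)$, both of these vanish identically, hence so do those of $\connM=\vp_\alpha^\star\connEuc$. I do not expect any genuine obstacle: the only point needing care is the reduction to testing on the parallel frame, which is immediate from the tensorial dependence of $\Curv$ and $T$ on their arguments, together with the fact that the pushed-forward frame is a commuting coordinate frame on $\R^n$.
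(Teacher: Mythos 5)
Your proof is correct, but it takes a genuinely different route from the paper. The paper's proof works with arbitrary vector fields $X,Y$ on a chart domain and argues by naturality: it pushes them forward by $\vp_\alpha$, invokes torsion-freeness of $\connEuc$ in $\R^n$, and uses the fact that pushforward intertwines Lie brackets, so that $\connM_XY-\connM_YX=\left(\vp_\alpha^{-1}\right)_*\brk{[\vp_{\alpha*}X,\vp_{\alpha*}Y]}=[X,Y]$; the curvature case is dismissed as ``a similar argument.'' That is precisely the ``more conceptual alternative'' you sketch at the end. Your primary argument instead tests the torsion and curvature tensors on the parallel frame $\partial^\alpha_i$ (established as parallel in the paper just before the proposition) and concludes by $C^\infty$-linearity, noting additionally that the $\partial^\alpha_i$ are the coordinate fields of the chart and hence commute. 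Each approach has a virtue: the naturality argument is uniform and conceptually transparent, but for curvature it silently requires verifying that $\vp_{\alpha*}$ also intertwines \emph{iterated} covariant derivatives, a step the paper does not spell out; your frame argument fills exactly that gap, since all three terms of $\Curv(X,Y)\partial^\alpha_i$ vanish outright, making the curvature computation completely explicit at the cost of first invoking the parallel-frame fact and tensoriality. Both are complete and correct proofs of the proposition.
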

%%%%%%%%%

%%%%%%%
\begin{proof}
Let $(U,\vp)$ be a local chart and let $X,Y\in\Gamma(U;TU)$.
We have,
\[
\connM_X Y - \connM_Y X = \vp^{-1}_*\brk{\connEuc_{\vp_*(X)} \vp_*(Y) -
\connEuc_{\vp_*(Y)} \vp_*(X)},
\]
where $\vp_{*}(X)$ is the vector field on $\vp(U)$ corresponding to $X.$
Because the Euclidean connection is torsion free,
\[
\connM_X Y - \connM_Y X = \left(\vp^{-1}\right)_*\brk{[\vp_*(X) ,\vp_*(Y)]}= [X,Y],
\]
which proves that the torsion is zero. A similar argument shows that the curvature is zero as well.
\end{proof}
%%%%%%

To summarize, every affine manifold is equipped with a natural flat and torsion-free connection. For a locally Euclidean manifold, this connection is the Riemannian Levi-Civita connection.

%%%%%%%%%%%%%%%%%%%%%%%%%
\section{Monodromy}
\label{sec:monodromy}

The flatness of the connection on $\M$ implies that parallel transport in $T\M$ is path-independent within every homotopy class. If the defective body is simply-connected (as is the case, for example, for point defects in three dimensions), then parallel transport is totally path-independent. Otherwise, parallel transport depends on the homotopy class of the curve. In this case it is useful to analyze the affine manifold through its \emph{universal cover}.

%%%%%%%%%%%%%%%%%%
\subsection{The universal cover}

Let $p\in\M$; we denote as usual by $\pi_1(\M,p)$ the \emph{fundamental group} of $\M$ at $p$. Every
$g\in \pi_1(\M,p)$ is a homotopy  class of  curves $\gamma:I\to\M$, satisfying $\gamma(0)=\gamma(1)=p$.
$\pi_1(\M,p)$ is a group with respect to curve concatenation ($gh$ is the homotopy class $[\eta * \gamma]$, where $\gamma\in g$ and $\eta\in h$).
Although the fundamental group depends on the reference point, $\pi_1(\M,p)$ and $\pi_1(\M,q)$ are isomorphic. However, there is no canonical isomorphism unless the fundamental group is abelian (see below).

We next introduce the notion of a universal cover; see e.g., Hatcher \cite{Hat02} for a thorough presentation of the algebraic-topological constructs used in this section.

\begin{definition}
A manifold $\tM$ along with a  map $\pi:\tM\to \M$ is a covering space for $\M$, if for every $p\in\M$ there exists an open neighborhood $U_p\subset\M$, such that $\pi^{-1}(U_p)$ is a union of disjoint sets, each homeomorphic to $U_p$. The map $\pi$ is called a projection, and the set $\pi^{-1}(p)$ is called the fiber of $\tM$ over $p$. A universal cover of $\M$ is a simply-connected covering space.
\end{definition}

It is a known fact that a universal cover  exists if $\calM$ is connected, path-connected, and semi-locally simply-connected, all of which we will assume. Moreover, the universal cover is unique up to isomorphism. The universal cover can be constructed explicitly from the fundamental groupoid of all homotopy classes of curves in $\calM$.

The universal cover $\tM$ inherits the flat torsion-free connection of $\M$, $\conntM = \pi^\star\connM$, namely,
\beq
\conntM_X Y = d\pi^{-1}  \brk{(\pi^*\connM)_{X} d\pi(Y)}
\qquad
X,Y \in\Gamma(T\tM).
\label{eq:conntM}
\eeq
Since $\tM$ is simply-connected and $\conntM$ is flat and torsion-free, it follows that parallel transport in $T\tM$ is path-independent. For $p,q\in\tM$, we  denote by $\tPi_p^q:T_p\tM\to T_q\tM$ the parallel transport operator from $p$ to $q$. For $\gamma : I \to \M$, let
$\Pi_\gamma : T_{\gamma(0)}\M \to T_{\gamma(1)}\M$ denote parallel transport along $\gamma.$ Note that
\beq
\tPi_p^q = d\pi^{-1}_q\circ \Pi_\gamma\circ d\pi_p,
\label{eq:projPi}
\eeq
where $\gamma$ is the projection of a curve that connects $p$ to $q$
An important property of covering spaces is the existence of a \emph{unique lift} of curves (see Proposition 1.34 in \cite{Hat02}):
Let $\gamma:I\to\M$ with $\gamma(0)=p$. For every $\tp\in\pi^{-1}(p)$ there exists a unique curve $\tg:I\to\tM$ satisfying $\pi\circ\tg = \gamma$ (i.e., $\tg$ is a lift of $\gamma$) and $\tg(0) = \tp$.

%%%%%%%%%%%%%%%%%%%%%%%%%%%%%%%%
\subsection{Deck transformations}

Covering spaces come with automorphisms called \emph{deck transformations}, which are homeomorphisms $f:\tM\to\tM$ satisfying
\begin{equation}\label{eq:deck}
\pi\circ f = \pi.
\end{equation}
That is, a deck transformation maps continuously every point in the covering space to a point that lies on the same fiber. It is easy to see that deck transformations form a group under composition, which we denote by $\Deck(\tM)$.

There exists a close connection between $\Deck(\tM)$ and  $\pi_1(\M)$, which we present in the following propositions. The proofs, which are all classical, are given in order to demonstrate how the structure of a manifold is revealed through its universal cover.

%%%%%%%%%%
\begin{proposition}\label{pr:un}
A deck transformation of a path connected covering space is uniquely determined by its value at a single point.
\end{proposition}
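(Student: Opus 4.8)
The plan is to reduce the statement to the uniqueness of curve lifts already recorded above (Proposition 1.34 in \cite{Hat02}), using crucially that $\tM$ is path connected. Suppose $f,g\in\Deck(\tM)$ agree at a single point, say $f(\tp)=g(\tp)$ for some $\tp\in\tM$. I will show that $f(x)=g(x)$ for every $x\in\tM$, so that $f=g$.

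First I would fix an arbitrary $x\in\tM$ and, invoking path connectedness, choose a curve $\tg:I\to\tM$ with $\tg(0)=\tp$ and $\tg(1)=x$. The two composites $f\circ\tg$ and $g\circ\tg$ are then curves in $\tM$, and I claim both are lifts of one and the same base curve. Indeed, the defining property of deck transformations, $\pi\circ f=\pi=\pi\circ g$ (equation \eqref{eq:deck}), yields $\pi\circ(f\circ\tg)=\pi\circ\tg=\pi\circ(g\circ\tg)$, so that both composites project to $\gamma:=\pi\circ\tg$. Moreover they share an initial point, since $(f\circ\tg)(0)=f(\tp)=g(\tp)=(g\circ\tg)(0)$.

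Next I would apply the uniqueness of lifts: a curve in $\M$ admits exactly one lift to $\tM$ through a prescribed initial point. Since $f\circ\tg$ and $g\circ\tg$ both lift $\gamma$ and start at the same point of $\pi^{-1}(\gamma(0))$, they must coincide throughout $I$; evaluating at $t=1$ gives $f(x)=g(x)$. As $x$ was arbitrary, this establishes $f=g$.

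The only point demanding attention is the observation that $f\circ\tg$ and $g\circ\tg$ are genuine lifts of a \emph{common} curve, which is precisely where the deck condition \eqref{eq:deck} is used; once this is in place, path connectedness supplies a connecting curve to any target point and the lifting theorem does the rest. I therefore anticipate no substantive obstacle beyond this bookkeeping. (Alternatively, one could argue that the agreement set $\{x\in\tM: f(x)=g(x)\}$ is nonempty, open, and closed by a local-homeomorphism argument on evenly covered neighborhoods, and then conclude by connectedness; the curve-lifting route above is cleaner given what has already been stated.)
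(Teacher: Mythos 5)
Your proof is correct and rests on the same key ingredient as the paper's: the unique path-lifting property of covering spaces, applied to curves carried by the two deck transformations. The paper phrases it as a proof by contradiction using a concatenated curve through the agreement point, whereas you argue directly by composing the deck transformations with a path to an arbitrary target point, but the substance is identical (and your version is, if anything, cleaner).
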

%%%%%%%%%

%%%%%%%
\begin{proof}
Let $\vp_1, \vp_2\in\Deck(\tM)$ satisfy
\[
\vp_1(p) = \vp_2(p) = \tp.
\]
Suppose that there exists a point $q\in\tM$ for which
\[
q_1 = \vp_1(q) \ne \vp_2(q) = q_2.
\]
Let  $\alpha:I\to\tM$ be a curve that connects $q$ with $p$ and let $\beta:I\to\tM$ be a curve  that connects $p$ with $\tp$. The curves
\[
\begin{aligned}
\gamma_1 &= (\vp_1\circ\alpha^{-1})*\beta*\alpha \\
\gamma_2 &= (\vp_2\circ\alpha^{-1})*\beta*\alpha,
\end{aligned}
\]
connect $q$ to $q_1$ and $q_2$ respectively. However,
\[
\pi\circ\gamma_1 = (\pi\circ\alpha^{-1})*(\pi\circ\beta)*(\pi\circ\alpha) = \pi\circ\gamma_2,
\]
and $\gamma_1(0) = \gamma_2(0) = q$, thus violating the unique lift property.  This uniqueness proof shows in fact how to construct the deck transformation given its value at a single point (Figure~\ref{fig:deck1}).

\begin{figure}
\begin{center}
\includegraphics[height=2in]{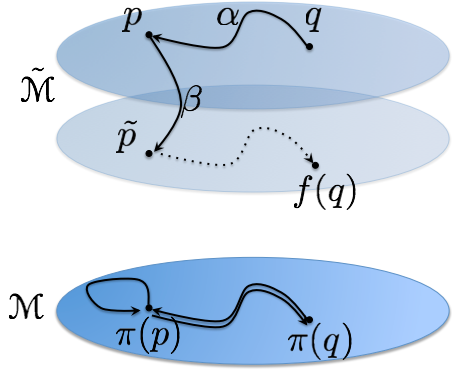}
\end{center}
\caption{Construction of a deck transformation given that $p$ is mapped into $\tp$. }
\label{fig:deck1}
\end{figure}

\end{proof}
%%%%%%

%%%%%%%%%
\begin{proposition}
\label{prop:g-vpg}
Let $\tM$ be the universal cover of $\M.$ Given a reference point $p\in\tM$, there exists a canonical isomorphism between $\Deck(\tM)$ and  $\pi_1(\M,\pi(p))$. If the fundamental group is abelian, then this isomorphism is independent of the reference point.
\end{proposition}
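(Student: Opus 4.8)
The plan is to construct the canonical isomorphism explicitly by lifting loops, establish the standard covering-space properties, and then isolate the role of commutativity in the base-point statement. Fix $p\in\tM$ and write $x=\pi(p)$. Given $g\in\pi_1(\M,x)$ represented by a loop $\gamma$ based at $x$, let $\tg$ be its unique lift to $\tM$ with $\tg(0)=p$ (the unique lift property quoted above), and set $q=\tg(1)$, which lies in the fiber $\pi^{-1}(x)$ since $\pi\circ\tg=\gamma$ is a loop. Because $\tM$ is simply connected, the lifting criterion for maps into covering spaces guarantees a deck transformation sending $p$ to $q$, which is necessarily unique by Proposition~\ref{pr:un}; call it $\Phi_p(g)$. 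First I would check that $\Phi_p(g)$ depends only on the class $g$ and not on the representative $\gamma$: homotopic loops lift to paths with the same endpoint by the homotopy lifting property, so $q$ is well defined.

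Next I would verify that $\Phi_p$ is a group isomorphism. For the homomorphism property, to lift the concatenation representing $gh=[\eta*\gamma]$ starting at $p$ one first lifts $\gamma$, reaching $\Phi_p(g)(p)$, and then lifts $\eta$ from that point; since $\Phi_p(g)$ is a deck transformation it carries the lift of $\eta$ from $p$ to the lift of $\eta$ from $\Phi_p(g)(p)$, so the terminal point is $\Phi_p(g)\brk{\Phi_p(h)(p)}$. By uniqueness this gives $\Phi_p(gh)=\Phi_p(g)\circ\Phi_p(h)$, which is exactly why the concatenation convention $gh=[\eta*\gamma]$ was chosen. Injectivity follows because $\Phi_p(g)=\mathrm{id}$ forces the lift $\tg$ to be a loop, which is null-homotopic in the simply connected $\tM$; projecting the null-homotopy shows $\gamma$ is null-homotopic, so $g=1$. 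Surjectivity follows because any $f\in\Deck(\tM)$ satisfies $\pi(f(p))=\pi(p)=x$, so joining $p$ to $f(p)$ by a path in the path-connected $\tM$ and projecting yields a loop whose associated deck transformation is $f$.

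The heart of the matter is the base-point statement. Given a second reference point $p'\in\tM$, choose a path $\sigma$ in $\tM$ from $p$ to $p'$ and let $\delta=\pi\circ\sigma$, a path from $x$ to $x'=\pi(p')$; it induces the standard change-of-base-point isomorphism $\beta_\delta:\pi_1(\M,x)\to\pi_1(\M,x')$. I would prove the compatibility $\Phi_{p'}\circ\beta_\delta=\Phi_p$ by a lifting computation: the lift from $p'$ of the conjugated loop representing $\beta_\delta(g)$ runs backwards along $\sigma$ to $p$, then along $\tg$, then forward along the image of $\sigma$ under the deck transformation $\Phi_p(g)$, terminating at $\Phi_p(g)(p')$; by uniqueness $\Phi_{p'}\brk{\beta_\delta(g)}=\Phi_p(g)$. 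Since this holds for every choice of $\sigma$, base-point independence reduces to the independence of $\beta_\delta$ from the path $\delta$. Two paths from $x$ to $x'$ differ by a loop, and the two isomorphisms differ by conjugation by its class, an inner automorphism of $\pi_1(\M,x)$; when the fundamental group is abelian every inner automorphism is trivial, so $\beta_\delta$ is canonical and $\Phi_{p'}=\Phi_p$ under the resulting canonical identification $\pi_1(\M,x)\cong\pi_1(\M,x')$.

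The main obstacle is precisely this last step: making sense of \emph{independent of the reference point} requires a canonical identification of the fundamental groups at the two base points, and such an identification exists only because commutativity collapses the conjugation ambiguity inherent in the change-of-base-point map. In the special case $x=x'$ (same fiber) the same phenomenon appears as the relation $\Phi_{p'}=c_\tau\circ\Phi_p$, where $\tau$ is the deck transformation with $\tau(p)=p'$ and $c_\tau$ denotes conjugation in $\Deck(\tM)$; abelianness of $\Deck(\tM)\cong\pi_1(\M)$ forces $c_\tau=\mathrm{id}$. Everything else is classical covering-space theory, so the only genuinely load-bearing hypothesis is the commutativity of $\pi_1(\M)$.
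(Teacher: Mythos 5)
Your proof is correct and its core is the same as the paper's: associate to a loop the endpoint of its unique lift, use Proposition~\ref{pr:un} to pin down the corresponding deck transformation, and trace base-point dependence to a conjugation. You are more thorough where the paper says ``it is easy to see'': your verifications of the homomorphism property (including the observation that the convention $gh=[\eta*\gamma]$ is exactly what makes $\Phi_p$ a homomorphism rather than an anti-homomorphism), of injectivity, and of surjectivity are all sound. The one genuine difference is in the final step. The paper compares only reference points $p,p'$ lying on the \emph{same fiber}, so that both isomorphisms target the literal same group $\pi_1(\M,\pi(p))$, and it exhibits the discrepancy as conjugation by $[\pi\circ\gamma]$. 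You treat the general case of reference points lying over \emph{different} points of $\M$, where ``independent of the reference point'' only makes sense after identifying $\pi_1(\M,x)$ with $\pi_1(\M,x')$, and you correctly isolate the key point: that identification is canonical precisely because abelianness kills the path-dependence (the inner-automorphism ambiguity) of the change-of-base-point map $\beta_\delta$. Your same-fiber remark $\Phi_{p'}=c_\tau\circ\Phi_p$ recovers the paper's computation (up to the immaterial convention of whether one conjugates by $\tau$ or $\tau^{-1}$). So your argument subsumes the paper's and establishes the stronger, and arguably the intended, reading of the statement; what it costs is the extra bookkeeping of the change-of-base-point isomorphism, which the paper avoids by restricting to a single fiber.
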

%%%%%%%%

%%%%%%%
\begin{proof}
Given a reference point $p$, to every  $\vp\in\Deck(\tM)$ corresponds an element $g\in\pi_1(\M,\pi(p))$ by taking a loop in $\M$ that is the projection of a curve in $\tM$ that connects $p$ to $\vp(p)$. It is easy to see that this mapping is well defined and does not depend on the chosen curve. Conversely, to every $g\in\pi_1(\M,\pi(p))$ corresponds a deck transformation, by mapping $p$ to the end point of the unique lift of a curve $\gamma:I\to\M$ that represents $g$ (see Figure~\ref{fig:deck2}). It is easy to see that this mapping between deck transformations and the fundamental group is both bijective and preserves the group structure, i.e., it is an isomorphism, which we denote by $\Phi(\cdot,p) : \Deck(\tM)\to\pi_1(\M,p)$.

Let $p'$ be another point on the same fiber as $p$. Let $\gamma$ be a curve that connects $p$ to $p'$ and let $\beta$ be a curve that connects $p$ to $\vp(p)$. Then,
\[
\Phi(\vp,p') = \left[\pi \circ \left((\vp\circ\gamma)*\beta*\gamma^{-1}\right)\right] = [\pi \circ \vp\circ\gamma]\cdot[\pi \circ \beta]\cdot[\pi \circ \gamma^{-1}].
\]
Note that
\[
[\pi \circ \beta] = \Phi(\vp,p),
\quad
[\pi \circ \vp\circ\gamma] = [\pi \circ \gamma]
\textand
[\pi \circ \gamma^{-1}] = [\pi \circ \gamma]^{-1}.
\]
Hence
\[
\Phi(\vp,p') =  [\pi \circ \gamma] \cdot \Phi(\vp,p)\cdot  [\pi \circ \gamma]^{-1},
\]
and the mapping $\Phi(\vp,p)$ is independent of $p$ if and only if the fundamental group is abelian.

\begin{figure}
\begin{center}
\includegraphics[height=2in]{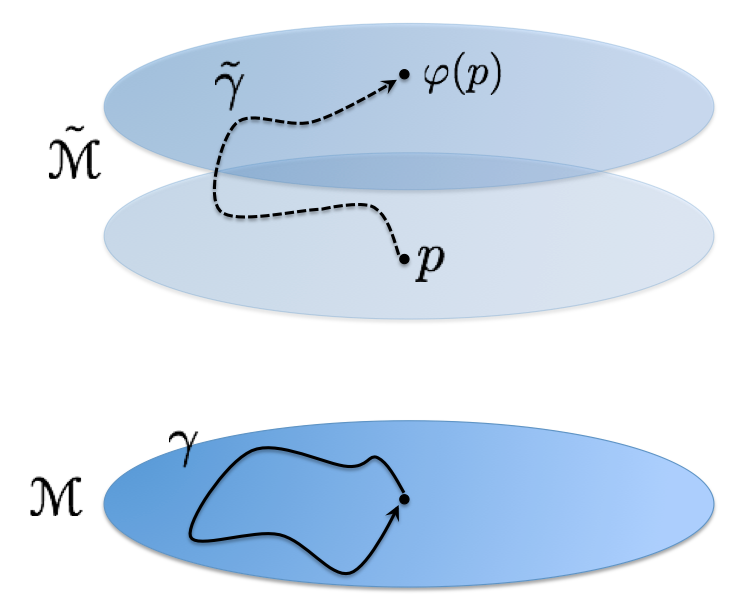}
\end{center}
\caption{The relation between  $\pi_1(\M)$ and $\Deck(\tM)$. Fix a point $p\in\tM$. Given a loop $\gamma$ in $\M$ based at $\pi(p),$ the corresponding deck transformation $\vp$ is the unique one such that $\vp(p)$ is the endpoint of the lift of $\gamma$ starting at $p$. By Proposition~\ref{pr:un}, deck transformations are fully specified by their value at a single point.} 
\label{fig:deck2}
\end{figure}

\end{proof}
%%%%%%

Deck transformations leave the connection invariant:

%%%%%%%%%%
\begin{proposition}
\label{prop:conntM1}
Let $\vp\in\Deck(\tM)$. Then,
\[
\vp^\star\conntM = \conntM.
\]
\end{proposition}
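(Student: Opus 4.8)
The plan is to reduce the statement to the compatibility of the pullback-connection operation with composition, together with the defining relation \eqref{eq:deck} for deck transformations. Recall that $\conntM = \pi^\star\connM$ is defined in \eqref{eq:conntM}, and that the pullback of $\conntM$ by the homeomorphism $\vp$ is defined by the same recipe,
\[
(\vp^\star\conntM)_X Y = d\vp^{-1}\brk{(\vp^*\conntM)_X\, d\vp(Y)}, \qquad X,Y\in\Gamma(T\tM).
\]
Since $\vp$ is a homeomorphism of $\tM$ satisfying $\pi\circ\vp=\pi$ and $\pi$ is a local diffeomorphism, both $d\vp^{-1}$ and $d\pi^{-1}$ are well defined (the former globally, the latter locally), so these formulas make sense.

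The heart of the argument is the functoriality identity $\vp^\star(\pi^\star\connM) = (\pi\circ\vp)^\star\connM$. Granting it, the conclusion is immediate:
\[
\vp^\star\conntM = \vp^\star(\pi^\star\connM) = (\pi\circ\vp)^\star\connM = \pi^\star\connM = \conntM,
\]
where the third equality is exactly the deck property $\pi\circ\vp=\pi$. To establish the functoriality I would argue at the level of parallel transport, which is the most transparent route and sidesteps heavy pullback-bundle bookkeeping. For a local diffeomorphism $f$ the pullback connection is characterized by the fact that its parallel transport along a curve $\gamma$ equals $df^{-1}\circ\Pi_{f\circ\gamma}\circ df$ (a legitimate local statement, since parallel transport is local, as already used in \eqref{eq:projPi}). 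Composing this characterization for $\pi$ and for $\vp$ and invoking the chain rule $d(\pi\circ\vp)=d\pi\circ d\vp$ yields the identity directly.

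An equivalent and more geometric way to see the same fact, which I would include as a remark, is to compute in charts. Near a point $\tp$ and near $\vp(\tp)$ --- which lies over the same point of $\M$, since $\pi(\vp(\tp))=\pi(\tp)$ --- use the charts obtained by composing a single Euclidean isometry chart $\vp_\alpha$ of $\M$ with local inverses of $\pi$. In both charts $\conntM$ has vanishing Christoffel symbols, while the coordinate representation of $\vp$, namely $\vp_\alpha\circ\pi\circ\vp\circ(\vp_\alpha\circ\pi)^{-1}$, collapses to the identity by \eqref{eq:deck}. Thus $\vp$ is locally the identity in coordinates in which the connection is flat, so it trivially preserves $\conntM$. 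The only genuine subtlety, and the step I would be most careful about, is the functoriality identity itself, because of the canonical identifications of the pullback bundles such as $\vp^*\pi^*T\M$; phrasing everything through parallel transport is precisely what resolves this cleanly.
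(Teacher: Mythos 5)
Your proposal is correct and takes essentially the same route as the paper: the paper's entire proof is the chain $\vp^\star\conntM = \vp^\star\pi^\star\connM = \pi^\star\connM = \conntM$, citing $\pi\circ\vp=\pi$ for the identity $\vp^\star\pi^\star=\pi^\star$. The only difference is that you additionally justify the functoriality step (via parallel transport, plus a chart-based remark), which the paper takes for granted.
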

%%%%%%%%%

%%%%%%%
\begin{proof}
Note that both $\conntM$ and $\vp^\star\conntM$ are connections on $T\tM$.
Now,
\[
\vp^\star\conntM = \vp^\star\pi^\star\connM = \pi^\star\connM = \conntM,
\]
where the identity $\vp^\star\pi^\star = \pi^\star$ follows from the identity $\pi\circ\vp = \pi$.
\end{proof}
%%%%%%

A consequence of this invariance is that the differential of any deck transformation is a parallel section.

%%%%%%%%%%
\begin{proposition}
\label{prop:dvp_par}
Let $\vp\in\Deck(\tM)$. Then,
\[
d\vp \in \Gamma(\tM;T^*\tM \otimes \vp^* T\tM)
\]
is a parallel section.
\end{proposition}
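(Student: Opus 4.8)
The statement is the infinitesimal counterpart of \propref{prop:conntM1}: it recasts the fact that $\vp$ preserves $\conntM$ as the vanishing of the covariant derivative of the differential $d\vp$, i.e.\ $\vp$ is a totally geodesic (affine) map. My first task is to fix the connection with respect to which $d\vp$ is claimed to be parallel. Since $d\vp$ is a section of $T^*\tM\otimes\vp^*T\tM$, the relevant connection is the tensor-product connection built from $\conntM$ on $T^*\tM$ and from the pullback \emph{bundle} connection $\vp^*\conntM$ on the factor $\vp^*T\tM$. With this convention, the covariant derivative of $d\vp$, evaluated on vector fields $X,Y\in\Gamma(T\tM)$, reads
\[
\brk{\conntM d\vp}(X,Y)=\brk{\vp^*\conntM}_X\brk{d\vp(Y)}-d\vp\brk{\conntM_X Y}.
\]

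The second step is to rewrite the first term on the right using the pulled-back connection $\vp^\star\conntM$ on $T\tM$, defined exactly as in \eqref{eq:conntM} but with $\vp$ in place of $\pi$. By that definition, $\brk{\vp^\star\conntM}_X Y=d\vp^{-1}\brk{(\vp^*\conntM)_X d\vp(Y)}$, and since $\vp$ is a diffeomorphism this is equivalent to
\[
\brk{\vp^*\conntM}_X\brk{d\vp(Y)}=d\vp\brk{(\vp^\star\conntM)_X Y}.
\]
Substituting into the previous display and using the linearity of $d\vp$ gives
\[
\brk{\conntM d\vp}(X,Y)=d\vp\brk{(\vp^\star\conntM)_X Y-\conntM_X Y}.
\]

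The final step is to invoke \propref{prop:conntM1}, which asserts precisely that $\vp^\star\conntM=\conntM$. The bracketed difference then vanishes identically, so $\brk{\conntM d\vp}(X,Y)=0$ for all $X,Y$, which is the assertion that $d\vp$ is a parallel section.

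I expect the only real difficulty to be bookkeeping rather than analysis: one must keep straight the two distinct objects denoted $\vp^*\conntM$ (the pullback \emph{bundle} connection on $\vp^*T\tM$, which enters the definition of $\conntM d\vp$) and $\vp^\star\conntM$ (the pulled-back \emph{affine} connection on $T\tM$ appearing in \propref{prop:conntM1}), and relate them through the diffeomorphism $\vp$. Once this identification is in place the computation collapses, and no separate use of torsion-freeness or flatness is required beyond what is already encoded in \propref{prop:conntM1}.
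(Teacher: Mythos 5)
Your proof is correct and follows essentially the same route as the paper's: both compute the covariant derivative of $d\vp$ with respect to the connection induced on $T^*\tM\otimes\vp^*T\tM$ by $\conntM$ and the pullback bundle connection, rewrite $(\vp^*\conntM)_X d\vp(Y)$ as $d\vp\brk{(\vp^\star\conntM)_X Y}$ via the definition of the pulled-back affine connection, and then kill the resulting difference using Proposition~\ref{prop:conntM1}. Your explicit remark distinguishing the bundle pullback connection $\vp^*\conntM$ from the pulled-back affine connection $\vp^\star\conntM$ is exactly the bookkeeping the paper's proof performs implicitly.
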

%%%%%%%%%

%%%%%%%
\begin{proof}
Let $f:\tM\to\calN$, let $\Phi\in\Gamma(\tM;T^*\tM\otimes f^*T\calN)$, and let $X,v\in\Gamma(\tM;T\tM)$. Denote by $\hat\nabla$ the connection on $T^*\tM \otimes f^* T\calN$ induced by the connections $\conntM$ and $f^*\nabla^\calN.$ By definition,
\[
(f^*\nabla^\calN)_{X} \Phi(v) = (\hat \nabla_X\Phi)(v) + \Phi(\conntM_X v).
\]
In the present case we take $\calN = \tM$, $f = \vp$, and $\Phi = d\vp$.
Hence,
\[
(\hat\nabla_Xd\vp)(v) =  d\vp(\conntM_X v) -  (\vp^*\conntM)_{X} d\vp(v) =
d\vp\brk{\conntM_X v - (\vp^\star\conntM)_X v } = 0,
\]
where in the last step we used Proposition~\ref{prop:conntM1}.
\end{proof}
%%%%%%

%%%%%%%%%%%%%%%%%%%%
\subsection{The monodromy of an affine manifold}

The usefulness of the universal cover $\tM$ with its geometric properties pulled back from $\M$ stems from the fact that on the one hand it has patches that are ``genuine replica" of $\M$, and on other hand, parallel transport on its tangent space is path-independent. The ability to transport vectors allows an unambiguous definition of a ``displacement vector" with respect to a reference point, which is known as a \emph{developing map}:

%%%%%%%%%
\begin{definition}
Fix $p\in\tM$. The developing map,
\[
\dev:\tM\to T_{p}\tM
\]
is defined as follows: for every $q\in\tM$ choose a curve $\gamma:I\to\tM$ that connects $p$ with $q$. Then,
\[
\dev(q) = \int_0^1 \tPi^{p}_{\gamma(t)}(\dot{\gamma}(t))\, dt.
\]
\end{definition}
%%%%%%%%

Since $\conntM$ is torsion free, this definition is independent of the chosen curve.
This can be shown as follows:
Define a differential 1-form $\omega$ on $\tM$ with coefficients in
$T_p\tM$ by parallel transport, namely,
\[
\omega_q = \tPi_q^p.
\]
The developing map is then given by
\beq
\dev(q) = \int_\gamma \omega,
\label{eq:dev_int}
\eeq
where $\gamma:I\to\tM$ connects $p$ with $q$.
Cartan's first structural equation implies that $d\omega = 0.$ Since $\tM$
is simply-connected, the integral is path independent by
Stokes' theorem.

The developing map is the geometric equivalent of the crystallographic practice of counting discrete steps. Note that the common practice is to define the developing map into $\R^n$ using a trivialization of the tangent bundle; we have reasons that will become apparent below to associated the developing map with a particular tangent space.

Note that both $\tM$ and $T_p\tM$ are affine manifolds.
It follows from \eqref{eq:dev_int} that the developing map preserves the affine structure in the following sense:

%%%%%%%%%%
\begin{proposition}
\label{prop:ddev}
\[
d\dev_q(v) = \tPi_q^p(v),
\]
where on the right hand side we use the canonical identification of $T_{\dev(q)} T_p\tM$ with $T_p\tM$.
\end{proposition}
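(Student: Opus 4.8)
The plan is to read the result directly off the integral representation \eqref{eq:dev_int}, which exhibits $\dev$ as a primitive of the closed $T_p\tM$-valued $1$-form $\omega$. The assertion $d\dev_q(v) = \tPi_q^p(v) = \omega_q(v)$ is then precisely the infinitesimal counterpart of the path-independence already established via $d\omega = 0$ and simple connectivity: differentiating the relation ``$\dev = \int\omega$'' should recover $\omega$ itself. Since the target $T_p\tM$ is a vector space, $\dev$ is a vector-valued function on $\tM$, its differential at $q$ is a linear map $T_q\tM \to T_{\dev(q)}T_p\tM$, and the canonical identification $T_{\dev(q)}T_p\tM \cong T_p\tM$ is the usual one identifying the tangent space of a vector space at any of its points with the vector space itself. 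So the content reduces to verifying that this differential equals $\omega_q$.

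To carry this out, fix $q\in\tM$ and $v\in T_q\tM$, and choose a smooth curve $\sigma:(-\e,\e)\to\tM$ with $\sigma(0)=q$ and $\dot\sigma(0)=v$. Pick any curve $\gamma_0$ from $p$ to $q$, and for small $s$ form the concatenation $\gamma_0 * \sigma|_{[0,s]}$, a curve from $p$ to $\sigma(s)$. Because $d\omega = 0$ and $\tM$ is simply-connected, Stokes' theorem gives path-independence, so that
\[
\dev(\sigma(s)) = \int_{\gamma_0 * \sigma|_{[0,s]}} \omega = \dev(q) + \int_0^s \omega_{\sigma(t)}(\dot\sigma(t))\, dt.
\]
Differentiating this identity in $s$ at $s=0$ by the fundamental theorem of calculus yields
\[
d\dev_q(v) = \frac{d}{ds}\Big|_{s=0}\dev(\sigma(s)) = \omega_q(\dot\sigma(0)) = \omega_q(v) = \tPi_q^p(v),
\]
which is the claim.

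There is essentially no hard step here; the only points requiring care are bookkeeping ones. First, one must treat $\omega$ consistently as a $1$-form valued in the \emph{fixed} vector space $T_p\tM$, so that concatenation of paths corresponds to addition of integrals — this is exactly what legitimizes the splitting of the integral above and relies on the path-independence proven earlier from $d\omega=0$. Second, one must be explicit that the differential of a $T_p\tM$-valued map is computed through the flat (translation) structure of the vector space $T_p\tM$, which is precisely what the stated identification $T_{\dev(q)}T_p\tM \cong T_p\tM$ encodes; under any other identification the formula would acquire a correction term. Once these conventions are fixed, the computation is immediate.
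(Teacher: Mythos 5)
Your proof is correct and follows essentially the same route as the paper, which presents Proposition~\ref{prop:ddev} as an immediate consequence of the integral representation \eqref{eq:dev_int}, with path-independence coming from $d\omega = 0$ and simple connectivity. You have merely made explicit the differentiation step (concatenating a path to $q$ with a short curve in the direction $v$ and applying the fundamental theorem of calculus) that the paper leaves implicit.
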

%%%%%%%%%

The developing map is a useful tool for studying curves in $\M$ that surround defects, by studying how the developing map changes along their lift in $\tM$.
Fix a deck transformation $\vp$. Given $q\in\tM$, we express $\dev(\vp(q))$ by integrating along a curve that connects $p$ and $\vp(q)$ that is a concatenation of the form
\[
\alpha = (\vp\circ\gamma) * \beta,
\]
where $\gamma$ is a curve the connects $p$ and $q$, and $\beta$ connects $p$ with $\vp(p)$ (see Figure~\ref{fig:mono1}).

\begin{figure}
\begin{center}
\includegraphics[height=1.5in]{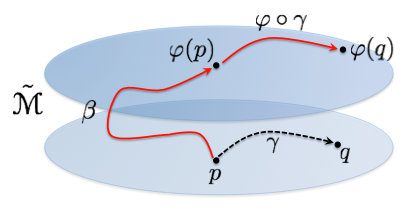}
\end{center}
\caption{The curve $\alpha = (\vp\circ\gamma) * \beta$ (in red) used to calculate $\dev(\vp(q))$.}
\label{fig:mono1}
\end{figure}

By definition,
\beq
\begin{split}
\dev(\vp(q)) &= \int \tPi^{p}_{\alpha(t)}(\dot{\alpha}(t))\,dt \\
&= \int \tPi^{p}_{\beta(t)}(\dot{\beta}(t))\,dt + \int \tPi^{p}_{\vp(\gamma(t))} \circ d\vp_{\gamma(t)} (\dot{\gamma}(t))\,dt  \\
&= \dev(\vp(p)) + \tPi^{p}_{\vp(p)}  \int \tPi^{\vp(p)}_{\vp(\gamma(t))} \circ d\vp_{\gamma(t)}(\dot{\gamma}(t))\,dt,
\end{split}
\label{eq:dev_vpg}
\eeq
where in the passage from the second to the third line we used the composition rule for parallel transport, $\tPi^{p}_{\vp(\gamma(t))} =
 \tPi^{p}_{\vp(p)}  \circ \tPi^{\vp(p)}_{\vp(\gamma(t))}$.
Note that the first term on the right hand side is independent of $q$; it only depends on $\vp$ and on the reference point $p$.

To further simplify the second term on the right hand side we need the following lemma:

%%%%%%%%%%
\begin{lemma}
\label{lemma:technical}
For every $q\in\tM$,
\beq
\tPi^{\vp(p)}_{\vp(q)} \circ d\vp_{q}  = d\vp_{p} \circ \tPi^{p}_{q}.
\label{eq:useful}
\eeq
\end{lemma}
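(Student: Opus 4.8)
The plan is to deduce the lemma from the fact, established in Proposition~\ref{prop:dvp_par}, that $d\vp$ is a parallel section of $T^*\tM \otimes \vp^* T\tM$. The geometric idea is that parallel transport commutes with any parallel bundle map, so the two routes from $T_q\tM$ to $T_{\vp(p)}\tM$ appearing in \eqref{eq:useful} — transport first and then push forward by $d\vp$, versus push forward first and then transport — must agree.

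Concretely, I would fix $v\in T_q\tM$ and choose a curve $\gamma:I\to\tM$ with $\gamma(0)=q$ and $\gamma(1)=p$. Let $V(t)=\tPi^{\gamma(t)}_q(v)$ be the parallel vector field along $\gamma$ with $V(0)=v$, so that $V(1)=\tPi^p_q(v)$ and $\conntM_{\dot\gamma}V=0$. Pushing forward by $\vp$, set $\sigma=\vp\circ\gamma$, a curve from $\vp(q)$ to $\vp(p)$, and consider the vector field $W(t)=d\vp_{\gamma(t)}(V(t))$ along $\sigma$.

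The crux of the argument is to show that $W$ is parallel along $\sigma$. This follows from the Leibniz rule for the induced connection $\hat\nabla$ on $T^*\tM\otimes\vp^*T\tM$ already used in the proof of Proposition~\ref{prop:dvp_par}, which along $\gamma$ reads
\[
(\vp^*\conntM)_{\dot\gamma}\big(d\vp(V)\big) = (\hat\nabla_{\dot\gamma}d\vp)(V) + d\vp\big(\conntM_{\dot\gamma}V\big).
\]
The first term on the right vanishes because $d\vp$ is parallel (Proposition~\ref{prop:dvp_par}), and the second vanishes because $V$ is parallel along $\gamma$. Since the covariant derivative of $d\vp(V)$ as a section of $\vp^*T\tM$ along $\gamma$ coincides with the covariant derivative of $W$ as a section of $T\tM$ along $\sigma=\vp\circ\gamma$, this shows $W$ is parallel along $\sigma$, i.e. $W(1)=\tPi^{\sigma(1)}_{\sigma(0)}(W(0))=\tPi^{\vp(p)}_{\vp(q)}(W(0))$.

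Evaluating at the endpoints gives $W(0)=d\vp_q(v)$ and $W(1)=d\vp_p(\tPi^p_q(v))$, whence
\[
d\vp_p\big(\tPi^p_q(v)\big)=\tPi^{\vp(p)}_{\vp(q)}\big(d\vp_q(v)\big),
\]
which is precisely \eqref{eq:useful} since $v\in T_q\tM$ was arbitrary. The only genuinely delicate step is the one identifying the two covariant derivatives along the curve and invoking the parallelism of $d\vp$; once the bookkeeping of the pullback connection is in place, the rest is immediate.
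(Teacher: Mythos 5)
Your proof is correct and follows essentially the same route as the paper's: both push a parallel vector field along a curve from $q$ to $p$ forward by $d\vp$, invoke Proposition~\ref{prop:dvp_par} to conclude the pushed-forward field is parallel along $\vp\circ\gamma$, and compare endpoints. Your write-up merely makes explicit the Leibniz-rule computation that the paper leaves implicit in its appeal to the parallelism of $d\vp$.
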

%%%%%%%%%

%%%%%%%
\begin{proof}
Let $\gamma:I\to\tM$ be a curve that connects $q$ with $p$. Then, the curve $\vp\circ \gamma:I\to\tM$ connects $\vp(q)$ with $\vp(p)$.
Let $\xi(t)$ be a parallel vector field along $\gamma(t)$ and let $\eta(t)$ be a parallel vector field along $\vp\circ\gamma$, satisfying
\beq
\eta(0) = d\vp_{q}(\xi(0))
\label{eq:prop0}
\eeq
(see Figure~\ref{fig:lemma}).
We  show below that
\beq
\eta(t) = d\vp_{\gamma(t)}(\xi(t)).
\label{eq:prop}
\eeq
Since $\eta(t)$ is a parallel vector field
\[
\eta(1) = \tPi_{\vp(q)}^{\vp(p)} \eta(0) = \tPi_{\vp(q)}^{\vp(p)} \circ d\vp_{q}(\xi(0)).
\]
On the other hand, by \eqref{eq:prop} and since $\xi(t)$ is a parallel vector field,
\[
\eta(1) = d\vp_{\gamma(1)}(\xi(1)) = d\vp_{p}\circ  \tPi_{q}^{p}(\xi(0)).
\]
Since this holds for arbitrary $\xi(0)$ we obtain the desired result.

It remains to prove \eqref{eq:prop}. $d\vp_{\gamma(t)}(\xi(t))$ is a vector field along $\vp\circ\gamma$ that satisfies the initial conditions \eqref{eq:prop0}. To show that it is equal to $\eta(t)$ it only remains to show that it is parallel, which follows from the fact that both $d\vp$ (by Proposition~\ref{prop:dvp_par}) and $\xi(t)$ are parallel sections.

\begin{figure}
\begin{center}
\includegraphics[height=1.5in]{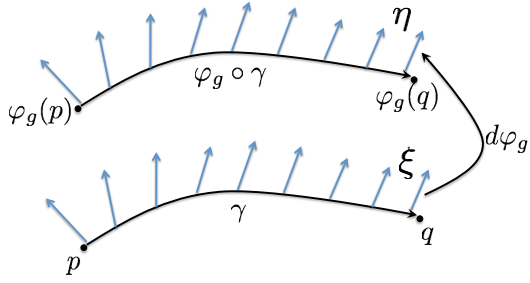}
\end{center}
\caption{Construction for the proof of Lemma~\ref{lemma:technical}.}
\label{fig:lemma}
\end{figure}

\end{proof}
%%%%%%

For a vector space $V$, let $\Aff(V)$ denotes the space of affine transformations of $V$.

%%%%%%%%%
\begin{theorem}
There exists $m_\vp\in\Aff(T_p\tM)$ such that
\[
\dev(\vp(q)) = m_\vp(\dev(q))
\]
for all $q \in \tM.$ Explicitly, $m_\vp$ is given by
\[
m_\vp (v) = A_\vp\,v + b_\vp,
\]
with $A_\vp\in \End(T_p\tM)$ given by
\beq
A_\vp = \tPi^{p}_{\vp(p)} \circ d\vp_{p},
\label{eq:Ag}
\eeq
and $b_\vp\in T_p\tM$ given by
\beq
b_\vp = \dev(\vp(p)).
\label{eq:bg}
\eeq
\end{theorem}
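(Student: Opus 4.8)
The plan is to derive the claimed affine formula directly from equation~\eqref{eq:dev_vpg}, which has already done most of the work: it expresses $\dev(\vp(q))$ as a $q$-independent term $\dev(\vp(p))$ plus the quantity $\tPi^{p}_{\vp(p)}\int \tPi^{\vp(p)}_{\vp(\gamma(t))} \circ d\vp_{\gamma(t)}(\dot{\gamma}(t))\,dt$, where $\gamma$ connects $p$ to $q$. The whole content of the theorem is to show that this second term is a fixed linear function of $\dev(q)$, and the tool that achieves this is Lemma~\ref{lemma:technical}.

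First I would apply Lemma~\ref{lemma:technical} pointwise along $\gamma$, taking $q=\gamma(t)$, to rewrite the integrand. The lemma gives $\tPi^{\vp(p)}_{\vp(\gamma(t))}\circ d\vp_{\gamma(t)} = d\vp_{p}\circ \tPi^{p}_{\gamma(t)}$, so the integrand becomes $d\vp_{p}\circ \tPi^{p}_{\gamma(t)}(\dot{\gamma}(t))$. The crucial observation is that $d\vp_{p}$ is now a single linear map, independent of the integration variable $t$, so it may be pulled outside the integral:
\[
\int \tPi^{\vp(p)}_{\vp(\gamma(t))}\circ d\vp_{\gamma(t)}(\dot{\gamma}(t))\,dt
= d\vp_{p}\left(\int \tPi^{p}_{\gamma(t)}(\dot{\gamma}(t))\,dt\right).
\]

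The remaining integral is, by definition of the developing map, exactly $\dev(q)$, since $\gamma$ connects $p$ to $q$. Substituting back into \eqref{eq:dev_vpg} yields
\[
\dev(\vp(q)) = \dev(\vp(p)) + \tPi^{p}_{\vp(p)}\circ d\vp_{p}\bigl(\dev(q)\bigr),
\]
which is precisely $m_\vp(\dev(q))$ with $A_\vp = \tPi^{p}_{\vp(p)}\circ d\vp_{p}$ and $b_\vp = \dev(\vp(p))$. To complete the argument I would note that $A_\vp$ is a genuine endomorphism of $T_p\tM$: under the canonical identifications, $d\vp_{p}$ maps $T_p\tM \to T_{\vp(p)}\tM$ and $\tPi^{p}_{\vp(p)}$ maps $T_{\vp(p)}\tM \to T_p\tM$, so their composition is an element of $\End(T_p\tM)$, while $b_\vp=\dev(\vp(p))\in T_p\tM$. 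Since $\dev$ is itself well defined (independent of the connecting curve), the derived identity holds for every $q$, giving the existence of $m_\vp\in\Aff(T_p\tM)$.

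I do not expect a serious obstacle here, precisely because the technical difficulty has been quarantined into Lemma~\ref{lemma:technical}, whose proof used the parallelism of $d\vp$ from Proposition~\ref{prop:dvp_par}. The only point requiring care is the bookkeeping of base points and the canonical identifications: one must track that $d\vp$ is evaluated at $p$ after the lemma is applied (not at the moving point $\gamma(t)$), which is what legitimizes extracting it from the integral, and one must confirm that the compositions land in the correct tangent space $T_p\tM$ so that $m_\vp$ is honestly an affine self-map of $T_p\tM$.
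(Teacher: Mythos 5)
Your proof is correct and follows essentially the same route as the paper's own argument: starting from equation~\eqref{eq:dev_vpg}, applying Lemma~\ref{lemma:technical} with $q=\gamma(t)$ to rewrite the integrand, pulling the constant map $d\vp_p$ out of the integral, and recognizing the remaining integral as $\dev(q)$. Your additional remarks on base-point bookkeeping and the canonical identifications placing $A_\vp$ in $\End(T_p\tM)$ are sound and only make explicit what the paper leaves implicit.
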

%%%%%%%%

%%%%%%%
\begin{proof}
By \eqref{eq:dev_vpg},
\[
\dev(\vp(q)) =  \dev(\vp(p)) + \tPi^{p}_{\vp(p)}  \int \tPi^{\vp(p)}_{\vp(\gamma(t))} \circ d\vp_{\gamma(t)}(\dot{\gamma}(t))\,dt.
\]
By \eqref{eq:useful} with $q = \gamma(t)$,
\[
\tPi^{\vp(p)}_{\vp(\gamma(t))} \circ d\vp_{\gamma(t)}  = d\vp_{p} \circ \tPi^{p}_{\gamma(t)},
\]
hence
\[
\begin{split}
\dev(\vp(q)) &=  \dev(\vp(p))  + \tPi^{p}_{\vp(p)} \circ d\vp_{p}  \int  \tPi^{p}_{\gamma(t)}(\dot{\gamma}(t))\,dt  \\
&=  b_\vp + A_\vp \dev(q).
\end{split}
\]
\end{proof}
%%%%%%

Let
\[
m : \Deck(\tM) \longrightarrow \Aff(T_p\tM)
\]
be given by $\vp \mapsto m_\vp.$ As shown next, this mapping is a group homomorphism, and it is called the \emph{monodromy} of the affine manifold $\M$.

%%%%%%%%%%
\begin{proposition}
$m$ is a group homomorphism, namely,
\[
m_{\vp\circ \psi} = m_\vp\circ m_\psi.
\]
\end{proposition}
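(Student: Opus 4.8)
The plan is to deduce the homomorphism property directly from the characterizing relation $\dev(\vp(q)) = m_\vp(\dev(q))$ established in the preceding theorem, rather than from the explicit formulas for $A_\vp$ and $b_\vp$. The idea is to evaluate $\dev$ on the point $(\vp\circ\psi)(q)$ in two ways and compare. On one hand, applying the defining relation to the single deck transformation $\vp\circ\psi$ gives
\[
\dev\brk{(\vp\circ\psi)(q)} = m_{\vp\circ\psi}\brk{\dev(q)}
\]
for every $q\in\tM$. On the other hand, writing $(\vp\circ\psi)(q) = \vp(\psi(q))$ and applying the defining relation first to $\vp$ at the shifted point $\psi(q)$ and then to $\psi$ at the point $q$, I obtain
\[
\dev\brk{\vp(\psi(q))} = m_\vp\brk{\dev(\psi(q))} = m_\vp\brk{m_\psi\brk{\dev(q)}} = (m_\vp\circ m_\psi)\brk{\dev(q)}.
\]
Comparing the two expressions yields
\[
m_{\vp\circ\psi}\brk{\dev(q)} = (m_\vp\circ m_\psi)\brk{\dev(q)} \qquad \text{for all } q\in\tM.
\]

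The only step that requires a genuine argument rather than formal manipulation is the passage from agreement on the image $\dev(\tM)$ to equality of the two affine maps $m_{\vp\circ\psi}$ and $m_\vp\circ m_\psi$ on all of $T_p\tM$. Two affine self-maps of the $n$-dimensional vector space $T_p\tM$ coincide as soon as they agree on a subset with nonempty interior (equivalently, on a set containing $n+1$ affinely independent points), so it suffices to show that $\dev(\tM)$ has nonempty interior. This follows from \propref{prop:ddev}: since $d\dev_q = \tPi_q^p$ is a linear isomorphism at every $q$, the developing map is a local diffeomorphism, hence an open map, and its image is a nonempty open subset of $T_p\tM$. Thus $\dev(\tM)$ contains an affine basis, and the two affine maps, agreeing there, must be equal, giving $m_{\vp\circ\psi} = m_\vp\circ m_\psi$.

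I expect the composition-order bookkeeping to be the only place where care is needed: one must apply the defining relation to $\vp$ at the point $\psi(q)$ rather than at $q$, and keep in mind that the group law on $\Deck(\tM)$ is composition, so that the homomorphism property is precisely $m_{\vp\circ\psi}=m_\vp\circ m_\psi$. The openness of $\dev$, while essential for the final cancellation, is immediate from the already-established identity $d\dev_q(v)=\tPi_q^p(v)$ together with the invertibility of parallel transport, so it presents no real difficulty.
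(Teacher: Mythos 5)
Your proof is correct, but it takes a genuinely different route from the paper's. The paper works directly with the explicit formulas \eqref{eq:Ag} and \eqref{eq:bg}: it computes $A_{\vp\circ\psi}$ by splitting the parallel transport $\tPi^{p}_{\vp\circ\psi(p)}$ through the intermediate point $\vp(p)$ and invoking Lemma~\ref{lemma:technical} once more to commute $\tPi^{\vp(p)}_{\vp\circ\psi(p)}\circ d\vp_{\psi(p)} = d\vp_p\circ\tPi^p_{\psi(p)}$, obtaining $A_{\vp\circ\psi}=A_\vp\circ A_\psi$; it then computes $b_{\vp\circ\psi}=\dev(\vp\circ\psi(p))=A_\vp b_\psi + b_\vp$ from the relation $\dev\circ\vp = A_\vp\circ\dev+b_\vp$, and assembles the two. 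You instead treat the relation $\dev\circ\vp = m_\vp\circ\dev$ as the defining property, evaluate $\dev$ at $(\vp\circ\psi)(q)$ in two ways, and reduce everything to a uniqueness statement: two affine self-maps of $T_p\tM$ agreeing on $\dev(\tM)$ must coincide, which you justify by noting that \propref{prop:ddev} makes $\dev$ a local diffeomorphism, so its image is open and hence affinely spans. This is a valid and arguably cleaner argument --- it avoids a second appeal to the technical lemma and to any parallel-transport bookkeeping, and it works verbatim for any intertwining relation of this kind. What it gives up is the explicit composition law for the two parts, $A_{\vp\circ\psi}=A_\vp\circ A_\psi$ and $b_{\vp\circ\psi}=A_\vp b_\psi+b_\vp$, which the paper's computation yields as a by-product and which is precisely the structure exploited later when discussing how disclinations obstruct a well-defined Burgers vector. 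One small point worth making explicit in your write-up: the theorem as stated only asserts existence of $m_\vp$, so your argument in effect also supplies the uniqueness clause that makes ``the'' map $m_\vp$ well defined; your openness argument proves both at once.
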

%%%%%%%%%

%%%%%%%
\begin{proof}
By definition,
\[
A_{\vp\circ \psi} = \tPi^{p}_{\vp\circ \psi(p)} \circ d\vp_{\psi(p)} \circ d\psi_{p}
= \tPi^{p}_{\vp(p)} \circ \tPi^{\vp(p)}_{\vp\circ \psi(p)} \circ d\vp_{\psi(p)} \circ d\psi_{p}.
\]
We use once again \eqref{eq:useful} to obtain
\[
\tPi^{\vp(p)}_{\vp\circ \psi(p)} \circ d\vp_{\psi(p)} = d\vp_{p} \circ \tPi^{p}_{\psi(p)}.
\]
Hence,
\[
A_{\vp\circ \psi} =\tPi^{p}_{\vp(p)} \circ d\vp_{p} \circ \tPi^{p}_{\psi(p)} \circ d\psi_{p} = A_\vp \circ A_\psi.
\]
Finally, using the fact that $\dev\circ \vp = A_\vp\circ\dev + b_\vp$ and
$\dev(\psi(p)) = b_\psi$,
\begin{align*}
b_{\vp\circ \psi} &= \dev(\vp\circ\psi(p))  \\
&= A_\vp \dev(\psi(p)) + \dev(\vp(p)) \\
&= A_\vp b_\psi + b_\vp.
\end{align*}
Thus,
\[
m_{\vp\circ \psi}(v) = A_{\vp\circ \psi} v + b_{\vp\circ \psi} = A_\vp A_\psi v + A_\vp b_\psi + b_\vp = m_\vp\circ m_\psi(v).
\]
\end{proof}
%%%%%%

Having explicit expressions for the affine transformations $m_\vp$, we turn to study their properties. In particular, we allow the reference point to vary and view the developing map as a function
\[
\dev:\tM\times\tM \to T\tM,
\]
such that $\varpi(\dev(p,q)) = p$, where $\varpi:T\tM\to\tM$ is the canonical projection. From this angle, we may view $m$ as a homomorphism from the group of deck transformations to the group of sections of the principal bundle $\Aff(T\tM)$.
The next proposition states that $A_\vp$ is determined by its value at a single point:

%%%%%%%%%%
\begin{proposition}
$A_\vp$ is a parallel section of $\End(T\tM)$.
\end{proposition}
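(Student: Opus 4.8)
The plan is to characterize the parallel-section condition through parallel transport and then verify it by a direct manipulation of transport operators, leaning on the technical identity of Lemma~\ref{lemma:technical}. Viewing the reference point as variable, the endomorphism field in question is
\[
A_\vp(p) = \tPi^{p}_{\vp(p)} \circ d\vp_{p} \in \End(T_p\tM),
\]
and a section $S$ of $\End(T\tM)$ is parallel precisely when its value is carried into itself along every curve, i.e. when $\tPi^{p}_{q}\circ S(q)\circ \tPi^{q}_{p} = S(p)$ for all $p,q\in\tM$ (this conjugation is the parallel transport on $\End(T\tM)$ induced by $\conntM$, and since $\tM$ is simply connected it is path-independent). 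So it suffices to establish
\[
\tPi^{p}_{q}\circ A_\vp(q)\circ \tPi^{q}_{p} = A_\vp(p)
\qquad \text{for all } p,q\in\tM.
\]

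First I would expand the left-hand side using the definition of $A_\vp(q)$, obtaining $\tPi^{p}_{q}\circ \tPi^{q}_{\vp(q)}\circ d\vp_{q}\circ \tPi^{q}_{p}$. The crux is then to eliminate the point-dependence of $d\vp_q$. For this I would invoke Lemma~\ref{lemma:technical} for the pair $(p,q)$ (its proof uses only a curve joining the two points, so it holds for an arbitrary pair, not merely for the developing-map base point); solving the resulting identity for $d\vp_q$ gives
\[
d\vp_{q} = \tPi^{\vp(q)}_{\vp(p)}\circ d\vp_{p}\circ \tPi^{p}_{q}.
\]
Substituting this in, the expression becomes a string of transport operators flanking the single differential $d\vp_{p}$. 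I would then collapse the transports using path-independence together with the composition rule $\tPi^{a}_{b}\circ \tPi^{b}_{c} = \tPi^{a}_{c}$: the rightmost pair $\tPi^{p}_{q}\circ \tPi^{q}_{p}$ cancels to the identity, while the remaining chain $\tPi^{p}_{q}\circ \tPi^{q}_{\vp(q)}\circ \tPi^{\vp(q)}_{\vp(p)}$ telescopes to $\tPi^{p}_{\vp(p)}$. What survives is exactly $\tPi^{p}_{\vp(p)}\circ d\vp_{p} = A_\vp(p)$, as required.

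Alternatively, and more conceptually, one may regard $A_\vp$ as the composite of the parallel section $d\vp\in\Gamma(T^*\tM\otimes\vp^*T\tM)$ (Proposition~\ref{prop:dvp_par}) with the bundle homomorphism $\vp^*T\tM\to T\tM$ given fiberwise by $\tPi^{p}_{\vp(p)}$; the content of the computation above is precisely that this ``transport-back'' homomorphism is itself covariantly constant, which dovetails with the invariance $\vp^\star\conntM = \conntM$ of Proposition~\ref{prop:conntM1}. I expect the only genuine subtlety to be bookkeeping: keeping the source and target of each transport operator straight, confirming that Lemma~\ref{lemma:technical} may legitimately be applied to the arbitrary pair $(p,q)$, and verifying that the induced transport on $\End(T\tM)$ is the conjugation written above. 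Once those points are secured, the telescoping is routine.
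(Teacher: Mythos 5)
Your proof is correct, and its primary route genuinely differs from the paper's. The paper's own argument is precisely your ``alternative'' paragraph: it defines a section $\tPi_\vp \in \Gamma(\tM;\vp^*T\tM\otimes T\tM)$ by $(\tPi_\vp)_p = \tPi^p_{\vp(p)}$, observes that $A_\vp = \tPi_\vp\circ d\vp$, and concludes at once because both factors are parallel --- $d\vp$ by Proposition~\ref{prop:dvp_par}, and $\tPi_\vp$ because parallel transport on the simply connected $\tM$ is path-independent. Your main argument instead characterizes parallel sections of $\End(T\tM)$ by invariance under transport conjugation, $\tPi^{p}_{q}\circ A_\vp(q)\circ \tPi^{q}_{p} = A_\vp(p)$, and verifies this identity by solving Lemma~\ref{lemma:technical} for $d\vp_q$ and telescoping the transport operators; the bookkeeping you describe checks out, and your two cautionary points are exactly the ones that need securing --- in particular that Lemma~\ref{lemma:technical}, though stated for the fixed developing-map base point, holds for an arbitrary pair since its proof uses only a curve joining the two points. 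As for what each approach buys: the paper's proof is shorter and structural, hiding the cancellations inside the general fact that a composition of parallel sections is parallel; yours is more self-contained at the level of explicit operator identities, making the cancellation mechanism visible, at the cost of routing through the lemma --- whose own proof rests on Proposition~\ref{prop:dvp_par} --- so the two arguments ultimately stand on the same foundation, namely the parallelism of $d\vp$ (equivalently, the connection invariance of Proposition~\ref{prop:conntM1}) plus path-independence of transport on $\tM$.
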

%%%%%%%%%

%%%%%%%
\begin{proof}
Define $\tPi_\vp \in\Gamma(\tM; \vp^*T\tM\otimes T\tM)$ by
\[
(\tPi_{\vp})_p = \tPi_{\vp(p)}^p.
\]
Then
\[
A_\vp = \tPi_\vp\circ d\vp,
\]
which is a parallel section of $\End(T\tM)$. Indeed,
both $d\vp$ and $\tPi_\vp$ are parallel sections, the first by Proposition~\ref{prop:dvp_par} and the second because parallel transport is path independent.
\end{proof}
%%%%%%

The analogous statement for $b_\vp$ is more subtle.

%%%%%%%%%%
\begin{proposition}
$b_\vp$ is a parallel vector field if and only if $A_\vp$ is the identity section of $\End(T\tM)$.
\end{proposition}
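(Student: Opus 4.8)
The plan is to exhibit the base-point dependence of the developing map explicitly and to show, by a direct computation, that the covariant derivative of $b_\vp$ equals $A_\vp-\operatorname{Id}$; the asserted equivalence is then immediate because $A_\vp$ is a parallel section. Throughout, I write $D(p,q)\in T_p\tM$ for the value at $q$ of the developing map based at $p$, so that $b_\vp$ is the vector field $p\mapsto D(p,\vp(p))$ and the preceding theorem reads $D(p,\vp(q))=A_\vp\,D(p,q)+b_\vp(p)$. The goal identity is
\[
\conntM_X b_\vp=(A_\vp-\operatorname{Id})\,X,\qquad X\in\Gamma(T\tM).
\]

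First I would record how $D$ depends on its base point. The $T\tM$-valued one-forms defining the developing map at two base points are related by $\omega^{p'}=\tPi^{p'}_p\,\omega^{p}$, since $\tPi^{p'}_r=\tPi^{p'}_p\circ\tPi^{p}_r$ and the factor $\tPi^{p'}_p$ is independent of the integration variable. Integrating \eqref{eq:dev_int} along a path through $p$ — legitimate because $\tM$ is simply connected and $\omega$ is closed — gives the cocycle relation
\[
D(p',q)=D(p',p)+\tPi^{p'}_p\,D(p,q).
\]
Setting the target equal to $p'$ and using $D(p',p')=0$ gives the antisymmetry relation $\tPi^{p}_{q}\,D(q,p)=-D(p,q)$.

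To compute $\conntM_X b_\vp$ at a point $p$, I choose a curve $c$ with $c(0)=p$ and $\dot c(0)=X$ and parallel-transport $b_\vp(c(s))$ back to $p$. Applying $\tPi^{p}_{c(s)}$ to the cocycle relation with new base point $c(s)$ and target $\vp(c(s))$, and using $\tPi^{p}_{c(s)}\circ\tPi^{c(s)}_{p}=\operatorname{Id}$ together with the antisymmetry, I obtain
\[
\tPi^{p}_{c(s)}\,b_\vp(c(s))=-D(p,c(s))+D\bigl(p,\vp(c(s))\bigr).
\]
Differentiating at $s=0$ and invoking Proposition~\ref{prop:ddev}, which computes the differential of $D(p,\cdot)$ in its second argument as parallel transport back to $p$, the first term contributes $-\tPi^{p}_{p}X=-X$ and the second contributes $\tPi^{p}_{\vp(p)}\bigl(d\vp_p(X)\bigr)=A_\vp X$ by \eqref{eq:Ag}. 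This yields the desired identity.

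Finally, since $A_\vp$ is a parallel section of $\End(T\tM)$, the identity shows that $b_\vp$ is parallel precisely when $A_\vp-\operatorname{Id}$ annihilates every $X$ at every point, i.e.\ when $A_\vp$ is the identity section. I expect the main obstacle to be the correct handling of the base-point dependence: Proposition~\ref{prop:ddev} differentiates the developing map only in its second argument, so the cocycle and antisymmetry relations are exactly what is needed to reduce the $s$-derivative of a quantity living in the varying fiber $T_{c(s)}\tM$ to derivatives of $D(p,\cdot)$ in its second slot alone.
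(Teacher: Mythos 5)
Your proof is correct, and it rests on the same pivotal fact as the paper's proof---a formula expressing the covariant derivative of $b_\vp$ in terms of $A_\vp-\operatorname{Id}$---but it gets there by a different route. The paper argues integrally: it evaluates $(b_\vp)_q$ along the explicitly concatenated curve $(\vp\circ\gamma^{-1})*\beta_{\vp,p}*\gamma$ and, using the composition rule for parallel transport together with the definition of the section $A_\vp$, obtains
\[
(b_\vp)_q = \tPi_p^q (b_\vp)_p + \int \tPi^q_{\gamma(t)}\brk{\text{Id} - (A_\vp)_{\gamma(t)}}(\dot{\gamma}(t))\,dt,
\]
from which sufficiency is read off (the integrand vanishes) and necessity follows by differentiating along a curve. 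You instead derive the base-point cocycle relation $D(p',q)=D(p',p)+\tPi^{p'}_p D(p,q)$ for the two-point developing map, together with its antisymmetry consequence, and then differentiate directly via Proposition~\ref{prop:ddev} to get $\conntM_X b_\vp=(A_\vp-\operatorname{Id})X$ in one step; this is precisely the infinitesimal version of the paper's integral identity, with your cocycle relation playing the role of the paper's explicit curve concatenation. Two remarks. First, your sign is the correct one: the paper's necessity step displays $\tfrac{D}{dt}(b_\vp)_{\alpha(t)}=(\text{Id}-(A_\vp)_{\alpha(t)})(\dot{\alpha}(t))$, which is off by a sign (in the integral formula the curve $\gamma$ runs from $q$ to $p$, so differentiating in $q$ reverses orientation); this slip is immaterial to the conclusion, and one can confirm your sign against the disclination example, where $(b_k)_q=(A_k-I)$ applied to the developing coordinates of $q$. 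Second, your closing appeal to the parallelism of $A_\vp$ is not actually needed: the identity $\conntM_X b_\vp=(A_\vp-\operatorname{Id})X$ already gives the equivalence pointwise, since $b_\vp$ is parallel exactly when this vanishes for every $X$ at every point.
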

%%%%%%%%%

%%%%%%%
\begin{proof}
By definition,
\[
(b_\vp)_p = \int \tPi^{p}_{\beta_{\vp,p}(t)}(\dot{\beta}_{\vp,p}(t))\,dt.
\]
where $\beta_{\vp,p}$ is a curve that connects $p$ to $\vp(p)$. For $q\in\tM$ set
\[
\beta_{\vp,q} =  (\vp\circ\gamma^{-1}) * \beta_{\vp,p} * \gamma,
\]
where $\gamma$ is a curve from $q$ to $p$. Then,
\[
\begin{split}
(b_\vp)_q &= \tPi_p^q (b_\vp)_p - \int \tPi_{\vp\circ\gamma(t)}^q \circ d\vp_{\gamma(t)}(\dot{\gamma}(t))\, dt +
 \int \tPi^q_{\gamma(t)}(\dot{\gamma}(t))\, dt \\
&= \tPi_p^q (b_\vp)_p  - \int \tPi^q_{\gamma(t)}\circ \tPi_{\vp\circ\gamma(t)}^{\gamma(t)} \circ d\vp_{\gamma(t)}(\dot{\gamma}(t))\, dt +
 \int \tPi^q_{\gamma(t)}(\dot{\gamma}(t))\, dt \\
&= \tPi_p^q (b_\vp)_p  +
 \int \tPi^q_{\gamma(t)} \brk{\text{Id} - (A_\vp)_{\gamma(t)}}(\dot{\gamma}(t))\, dt.
\end{split}
\]
If $A_\vp = \text{Id}$, then
\[
(b_\vp)_q  = \tPi_p^q (b_\vp)_p.
\]
To show that
this condition is also necessary, consider the vector field $(b_\vp)_{\alpha(t)}$ along a curve $\alpha$. Then
\[
\frac{D}{dt} (b_\vp)_{\alpha(t)} = (\text{Id} - (A_\vp)_{\alpha(t)})(\dot{\alpha}(t)).
\]
If at some point along the curve $(A_\vp)_{\alpha(t)} \ne \text{Id}$, then the vector field is not parallel.
\end{proof}
%%%%%%

The existence of the monodromy and its properties considered so far only depend on the affine structure of $\M$, and not on its metric properties. The last two properties are of metric nature:

%%%%%%%%%%
\begin{proposition}
$A_\vp \in O(T\tM)$.
\end{proposition}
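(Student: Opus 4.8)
The plan is to exhibit $A_\vp$ as a composition of two linear isometries and conclude that it is orthogonal. Recall from \eqref{eq:Ag} that $A_\vp = \tPi^{p}_{\vp(p)} \circ d\vp_{p}$, so the two factors compose as $T_p\tM \to T_{\vp(p)}\tM \to T_p\tM$. It therefore suffices to show that each factor preserves the inner product induced by the pulled-back metric $\tilde{\g} = \pi^\star\g$ on $T\tM$.

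First I would verify that $d\vp_{p}$ is a linear isometry. Since $\vp$ is a deck transformation, $\pi\circ\vp = \pi$ by \eqref{eq:deck}, and hence $\vp^\star\tilde{\g} = \vp^\star\pi^\star\g = (\pi\circ\vp)^\star\g = \pi^\star\g = \tilde{\g}$. Thus $\vp$ is an isometry of $(\tM,\tilde{\g})$, and its differential at every point is a linear isometry between the corresponding tangent spaces.

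Second I would verify that the parallel transport operator $\tPi^{p}_{\vp(p)}$ is a linear isometry, which amounts to the metric compatibility of $\conntM$. Because $\M$ is locally Euclidean, the induced connection $\connM$ coincides with the Levi-Civita connection of $\g$ (as noted after the flatness proposition), so $\connM$ is compatible with $\g$; pulling back via $\pi$ yields $\conntM\tilde{\g} = \pi^\star(\connM\g) = 0$, so $\conntM$ is compatible with $\tilde{\g}$. Consequently parallel transport preserves the inner product, and $\tPi^{p}_{\vp(p)}:T_{\vp(p)}\tM\to T_p\tM$ is an isometry.

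Composing the two isometries gives $A_\vp\in O(T_p\tM)$ pointwise. Since $A_\vp$ is moreover a parallel section of $\End(T\tM)$ by the preceding proposition, and parallel transport preserves $\tilde{\g}$, orthogonality at the single point $p$ propagates along any curve to every point of $\tM$, giving $A_\vp \in O(T\tM)$ as a section. I expect the only delicate step to be the metric compatibility of $\conntM$ and the associated bookkeeping of which metric sits on which tangent space under the pullback identifications; once that is settled, no genuine difficulty remains, as both factors are manifestly metric-preserving.
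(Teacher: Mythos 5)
Your proof is correct and follows essentially the same route as the paper: the paper's proof also writes $A_\vp = \tPi_\vp \circ d\vp$ and observes that both factors are norm preserving, so the composition is orthogonal. You simply fill in the details the paper leaves implicit (why $d\vp$ preserves $\pi^\star\g$ and why $\conntM$ is metric-compatible), which are exactly the right justifications.
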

%%%%%%%%%

%%%%%%%
\begin{proof}
Since $A_\vp = \tPi_\vp \circ d\vp$, and  both sections are norm preserving, their composition is a norm preserving endomorphism.
\end{proof}
%%%%%%

%%%%%%%%%%
\begin{proposition}
For a fixed reference point,
the developing map $\dev:\tM\to T_p\tM$ is a local isometry.
\end{proposition}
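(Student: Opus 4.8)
The plan is to reduce the claim to the single fact that parallel transport on $\tM$ preserves the inner product, which follows from metric compatibility of $\conntM$. By \propref{prop:ddev} the differential of the developing map is already known explicitly: for $q\in\tM$ and $v\in T_q\tM$,
\[
d\dev_q(v) = \tPi_q^p(v),
\]
under the canonical identification of $T_{\dev(q)}(T_p\tM)$ with $T_p\tM$. Thus $d\dev_q$ is nothing but the parallel transport operator $\tPi_q^p : T_q\tM \to T_p\tM$, and the whole statement hinges on the metric behavior of this operator.

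First I would pin down the Riemannian structure of the target. The vector space $T_p\tM$, equipped with the inner product $\g_p$, is itself a flat affine manifold; under the canonical identification $T_w(T_p\tM)\cong T_p\tM$ each of its tangent spaces carries the \emph{constant} inner product $\g_p$. With this understood, showing that $\dev$ is a local isometry amounts to verifying that for every $q$ and all $u,v\in T_q\tM$,
\[
\g_p\big(\tPi_q^p(u),\,\tPi_q^p(v)\big) = \g_q(u,v),
\]
that is, that $\tPi_q^p$ is a linear isometry from $(T_q\tM,\g_q)$ onto $(T_p\tM,\g_p)$.

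Then I would invoke metric compatibility. As noted at the end of Section~\ref{sec:singular_defects}, $\connM$ is the Levi-Civita connection of $(\M,\g)$ and is therefore compatible with $\g$; since $\conntM = \pi^\star\connM$ is its pullback along the local isometry $\pi$, it is compatible with the pulled-back metric. Compatibility gives, for any parallel fields $U(t),V(t)$ along a curve,
\[
\frac{d}{dt}\,\g(U,V) = \g\Big(\tfrac{DU}{dt},V\Big) + \g\Big(U,\tfrac{DV}{dt}\Big) = 0,
\]
so the inner product of parallel fields is constant along the curve. Applying this along a curve from $q$ to $p$ yields exactly the displayed identity, so $d\dev_q$ is an isometry and hence $\dev$ is a local isometry.

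The computation carries no genuine obstacle; the only care needed is the bookkeeping of the two canonical identifications — the target's tangent spaces with $T_p\tM$, and $d\dev_q$ with $\tPi_q^p$. Once these are in place the conclusion is immediate from the norm-preserving property of parallel transport that was already used in the proof that $A_\vp\in O(T\tM)$.
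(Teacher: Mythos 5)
Your proposal is correct and follows exactly the paper's own argument: the paper proves this proposition by citing Proposition~\ref{prop:ddev} together with the compatibility of parallel transport with the metric, which is precisely the structure of your proof. You simply spell out the details (the identification $d\dev_q = \tPi_q^p$, the constant metric on $T_p\tM$, and the standard computation showing parallel fields have constant inner product) that the paper leaves implicit.
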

%%%%%%%%%

%%%%%%%
\begin{proof}
This is an immediate consequence of Proposition~\ref{prop:ddev} and the fact that parallel transport is compatible with the metric.
\end{proof}
%%%%%%

%%%%%%%%%%%%%%%%%%%%%%%%%
\subsection{The abelian case: monodromy as sections over $\M$}
\label{sec:abelian}

The universal cover $\tM$ is a useful mathematical construct to study the geometry of the locally-Euclidean manifold $\M$. But from the point of view of material science, one would like to study defects with geometric constructs that are  defined on $\M$ rather than $\tM$. In particular, one would like to make geometric measurements along curves in $\M$ rather than along curves in $\tM$.

The monodromy is a homomorphism $m:\Deck(\tM)\to\Gamma(\tM,\Aff(T\tM))$. Although deck transformations can be related to $\pi_1(\M)$, this relation depends on the choice of a reference point in $\tM$. Hence, in the general case, monodromy cannot be associated with loops in $\M$. The exception is when $\pi_1(\M)$ is abelian, in which case to every element $g\in\pi_1(\M)$ corresponds a deck transformation, which we denote by $\vp_g$. Hence, we consider the monodromy as a homomorphism $m:\pi_1(\M)\to\Gamma(\tM,\Aff(T\tM))$, and for $g\in\pi_1(\M)$ we write $m_g(v) = A_g v + b_g$. Note that we haven't yet got rid of the covering space as the range of $m$ is sections of the principal bundle $\Aff(T\tM)$. However, as will be shown below, $A_g$ can always be identified with a section of $\End(T\M)$, whereas $b_g$ can under specific conditions be identified with a vector field in $\M$.

Before we proceed, we note that the abelian case is quite generic in the following sense: in addition to being directly applicable to both disclinations and screw dislocations, it is also applicable to more complicated defects if one restricts oneself to loops that surround all the defect loci.

%%%%%%%%%
\begin{theorem}
Suppose that $\pi_1(\M)$ is abelian.
To every $g\in \pi_1(\M)$
corresponds  a section $A^\M_g \in \Gamma(\M;\End(T\M))$ such that
$A_g = \pi^\star A^\M_g$, namely,
\[
A_g = d\pi^{-1}\circ (\pi^*A^\M_g)\circ d\pi.
\]
Moreover, $(A^\M_g)_p$ is the parallel transport operator along any loop $\gamma$ representing $g$ based at $p$.
\end{theorem}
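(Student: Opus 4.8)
The plan is to construct $A^\M_g$ pointwise by conjugating the already-available parallel section $A_g$ down through the covering map $\pi$, and to show that the abelian hypothesis is exactly what makes this construction independent of the chosen point in each fiber. First I would fix $g\in\pi_1(\M)$. Since $\pi_1(\M)$ is abelian, Proposition~\ref{prop:g-vpg} supplies a \emph{canonical} deck transformation $\vp_g$ corresponding to $g$, with the correspondence independent of the reference point; this is what allows $g$ to be treated as base-point free. I then set $A_g := A_{\vp_g}$, which by the earlier proposition is the parallel section $\tPi_{\vp_g}\circ d\vp_g$ of $\End(T\tM)$, so that $(A_g)_q = \tPi^q_{\vp_g(q)}\circ d(\vp_g)_q$; being parallel it is smooth and globally defined on $\tM$.

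Next I would conjugate $(A_g)_q$ through $d\pi_q$ and identify the result with parallel transport along a loop in $\M$. Writing $q' = \vp_g(q)$, so that $\pi(q')=\pi(q)=:p$, I would use two facts: differentiating $\pi\circ\vp_g=\pi$ at $q$ gives $d\pi_{q'}\circ d(\vp_g)_q = d\pi_q$; and \eqref{eq:projPi} relates parallel transport on $\tM$ to parallel transport $\Pi_{\gamma_q}$ along the projected loop $\gamma_q = \pi\circ\tg$, where $\tg$ is the lift from $q$ to $q'$. Combining these in $(A_g)_q = \tPi^q_{q'}\circ d(\vp_g)_q$ collapses the $d(\vp_g)_q$ against $d\pi_{q'}$ and yields a clean conjugation formula
\[
d\pi_q\circ (A_g)_q\circ d\pi_q^{-1} = \Pi_{\gamma_q},
\]
where $\gamma_q$ is a loop based at $p$ representing $g$ (its precise orientation fixed by the conventions built into the definition of the monodromy). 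This already establishes the ``moreover'' clause, once well-definedness is in hand.

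The crux — and the only place the abelian hypothesis is truly used — is to show the right-hand side depends on $q$ only through $p=\pi(q)$. Given two lifts $q_1,q_2$ of the same $p$, the associated loops $\gamma_{q_1},\gamma_{q_2}$ are both based at $p$, and by the base-point independence in Proposition~\ref{prop:g-vpg} they represent the \emph{same} class $g\in\pi_1(\M,p)$; hence they are homotopic rel $p$. Since $\connM$ is flat, parallel transport is invariant within a homotopy class, so $\Pi_{\gamma_{q_1}} = \Pi_{\gamma_{q_2}}$. I may therefore define $(A^\M_g)_p$ to be this common operator. The displayed formula then reads $d\pi_q\circ (A_g)_q\circ d\pi_q^{-1} = (A^\M_g)_{\pi(q)}$, which is precisely $A_g = \pi^\star A^\M_g$; smoothness of $A^\M_g$ follows because $\pi$ is a local diffeomorphism and $A_g$ is smooth.

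The main obstacle is exactly this fiber-independence step, and it is where the whole statement lives or dies. In the non-abelian case the correspondence $g\mapsto\vp_g$ is base-point dependent, so $\gamma_{q_1}$ and $\gamma_{q_2}$ would represent conjugate but generally distinct classes $g$ and $cgc^{-1}$; their parallel transports would then differ by conjugation and $A^\M_g$ would fail to be a single well-defined section of $\End(T\M)$. Thus the abelian assumption is not a technical convenience but the essential ingredient, in line with the base-point discussion in the introduction.
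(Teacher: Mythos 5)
Your proof is correct and follows essentially the same route as the paper: both conjugate the parallel section $A_g=\tPi_{\vp_g}\circ d\vp_g$ through $d\pi$ using $\pi\circ\vp_g=\pi$ and equation~\eqref{eq:projPi}, identify the result with parallel transport along the projected loop, and invoke the abelian hypothesis for independence of the chosen point in the fiber. Your treatment is in fact slightly more explicit than the paper's on the well-definedness step (homotopy rel basepoint plus flatness), where the paper simply asserts that the choice of $q$ is irrelevant.
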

%%%%%%%%

%%%%%%%
\begin{proof}
This is an immediate consequence of parallel transport in $T\tM$ being the pullback under $\pi$ of parallel transport in $T\M$. Denote by $\vp_g$ the deck transformation that corresponds to $g$. Such a deck transformation is uniquely determined without reference to basepoint because of the hypothesis that $\pi_1(M)$ is abelian. Let
\[
\tPi_g  : \vp_g^*T\tM \to T\tM
\]
be given by parallel transport.
Then,
\[
A_g = \tPi_g\circ d\vp_g.
\]
Since $\pi$ is a local diffeomorphism, we may write
\begin{align*}
A_g &= d\pi^{-1} \circ \brk{d\pi\circ \tPi_g\circ d\vp_g \circ d\pi^{-1}}\circ d\pi \\
& = d\pi^{-1} \circ \brk{d\pi\circ \tPi_g\circ \vp_g^*(d\pi^{-1})}\circ d\pi,
\end{align*}
where the last transition follows from equation~\eqref{eq:deck}.
It follows from equation~\eqref{eq:projPi} that the expression in parentheses is a section of $\End(\pi^* T\M)$ pulled back from a section of $\End(T\M)$. Namely, evaluated at the point $p \in \M$, it is the pull-back of the parallel transport operator along the loop $\gamma = \pi\circ\beta$, where $\beta$ is a curve that connects $\vp(q)$ to $q$ with $q \in \pi^{-1}(p).$ The choice of $q$ is irrelevant because of the abelian hypothesis.
\end{proof}
%%%%%%

%%%%%%%%%
\begin{theorem}\label{tm:bg}
Suppose that $\pi_1(\M)$ is abelian.
Then,
\[
b_g = d\pi^{-1}(\pi^* b^\M_g),
\]
where $b^\M_g \in \Gamma(\M;T\M)$ is given by
\[
(b^\M_g)_p = \int \Pi^p_{\gamma,\gamma(t)}(\dot{\gamma}(t))\, dt,
\]
with $[\gamma] = g$.
\end{theorem}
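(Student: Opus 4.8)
The plan is to first treat $b_g$ as a genuine vector field on $\tM$, show it is invariant under the deck group so that it descends to a vector field $b^\M_g$ on $\M$, and then read off the explicit formula from the projection identity~\eqref{eq:projPi}. Since $\pi_1(\M)$ is abelian, Proposition~\ref{prop:g-vpg} attaches to $g$ a \emph{basepoint-independent} deck transformation $\vp_g$ and makes $\Deck(\tM)\cong\pi_1(\M)$ abelian. Writing the developing map with a variable base point as $\dev(a,q)\in T_a\tM$, I set $(b_g)_a = \dev(a,\vp_g(a))$, exactly as in the proposition above characterizing when $b_\vp$ is parallel. Because $\vp_g$ is well-defined, $b_g\in\Gamma(\tM;T\tM)$ is well-defined, and the content of the theorem is precisely that this field is a pullback under $\pi$.

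A section of $T\tM$ has the form $d\pi^{-1}(\pi^* X)$ for some $X\in\Gamma(\M;T\M)$ exactly when it is $\Deck$-invariant, i.e. $d\psi_a\big((b_g)_a\big) = (b_g)_{\psi(a)}$ for every $\psi\in\Deck(\tM)$ and $a\in\tM$ (deck transformations act transitively on fibers, and $d\pi\circ d\psi = d\pi$). To verify this I would first establish the transformation rule
\[
\dev(\psi(a),\psi(q)) = d\psi_a\big(\dev(a,q)\big),
\]
obtained by writing $\dev(\psi(a),\psi(q)) = \int \tPi^{\psi(a)}_{\psi(\gamma(t))}\circ d\psi_{\gamma(t)}(\dot\gamma(t))\,dt$ for a curve $\gamma$ from $a$ to $q$ and applying Lemma~\ref{lemma:technical} (with $\vp=\psi$) to factor $d\psi_a$ out of the integral. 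Combining this with the commutativity $\vp_g\circ\psi = \psi\circ\vp_g$ of the abelian group $\Deck(\tM)$ gives
\[
(b_g)_{\psi(a)} = \dev\big(\psi(a),\psi(\vp_g(a))\big) = d\psi_a\big(\dev(a,\vp_g(a))\big) = d\psi_a\big((b_g)_a\big),
\]
which is the required invariance; hence $b_g$ descends to a well-defined $b^\M_g\in\Gamma(\M;T\M)$.

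It then remains to identify $b^\M_g$. For $p\in\M$ choose $\tilde p\in\pi^{-1}(p)$ and a curve $\tilde\gamma$ from $\tilde p$ to $\vp_g(\tilde p)$; by Proposition~\ref{prop:g-vpg} its projection $\gamma=\pi\circ\tilde\gamma$ is a loop representing $g$ based at $p$. Applying $d\pi_{\tilde p}$ to $(b_g)_{\tilde p}=\int \tPi^{\tilde p}_{\tilde\gamma(t)}(\dot{\tilde\gamma}(t))\,dt$ and using~\eqref{eq:projPi} on the reversed sub-curves of $\tilde\gamma$ (whose projections are the corresponding sub-curves of $\gamma$) converts each $d\pi_{\tilde p}\circ\tPi^{\tilde p}_{\tilde\gamma(t)}$ into $\Pi^{p}_{\gamma,\gamma(t)}\circ d\pi_{\tilde\gamma(t)}$; since $d\pi_{\tilde\gamma(t)}(\dot{\tilde\gamma}(t))=\dot\gamma(t)$ this yields $(b^\M_g)_p=\int \Pi^{p}_{\gamma,\gamma(t)}(\dot\gamma(t))\,dt$, the asserted formula, and independence of the representative $\gamma$ is automatic once $b^\M_g$ is known to be well-defined. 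The main obstacle is the deck-invariance step: it is the one place where the abelian hypothesis is used essentially \emph{twice} — once via Proposition~\ref{prop:g-vpg} to make $\vp_g$ basepoint-free, and once via the commutativity $\vp_g\circ\psi=\psi\circ\vp_g$ — and without commutativity $b_g$ would fail to descend to $\M$, rendering the explicit formula meaningless.
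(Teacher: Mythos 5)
Your proof is correct, but it is organized differently from the paper's. The paper's proof is a single pointwise computation: it takes a curve $\alpha$ in $\tM$ from $p$ to $\vp_g(p)$, sets $\gamma = \pi\circ\alpha$, inserts $d\pi^{-1}\circ d\pi$ into the integral defining $(b_g)_p$, and invokes \eqref{eq:projPi} to convert $\tPi^{p}_{\alpha(t)}$ into $\Pi^{\pi(p)}_{\gamma,\gamma(t)}\circ d\pi$ --- which is exactly your final identification step, and nothing more. What the paper leaves implicit is that the assignment $\pi(p)\mapsto \int \Pi^{\pi(p)}_{\gamma,\gamma(t)}(\dot{\gamma}(t))\,dt$ is a well-defined vector field on $\M$, i.e.\ that different points of the fiber and different representatives of $g$ give the same answer. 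Your descent argument supplies precisely this: the equivariance rule $\dev(\psi(a),\psi(q)) = d\psi_a(\dev(a,q))$, which you correctly derive from Lemma~\ref{lemma:technical}, combined with commutativity of $\Deck(\tM)$, shows that $a\mapsto \dev(a,\vp_g(a))$ is $\Deck$-invariant and hence is the pullback of a section over $\M$ (this characterization of pullback sections is valid because the deck group of the universal cover acts transitively on fibers and $d\pi\circ d\psi = d\pi$). So your route makes explicit where the abelian hypothesis is used beyond merely defining $\vp_g$ --- in the paper's version it enters, tacitly, through the fact that the projected loop represents the same basepoint-free class $g$ at every fiber point --- at the cost of a longer argument; the paper's route buys brevity by treating the existence of $b^\M_g\in\Gamma(\M;T\M)$ as understood. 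Both proofs ultimately rest on the same two ingredients, Lemma~\ref{lemma:technical} (equivalently, flatness and path-independence upstairs) and the projection identity \eqref{eq:projPi}, so the identification halves coincide.
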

%%%%%%%%

%%%%%%%
\begin{proof}
Let $\alpha$ be a curve in $\tM$ that connects $p$ and $\vp_g(p)$, and let $\gamma = \pi\circ\alpha$. Then
\[
\begin{split}
b_g &= \int \tPi_{\alpha(t)}^p (\dot{\alpha}(t))\,dt \\
&= d\pi^{-1} \int d\pi\circ \tPi_{\alpha(t)}^p \circ d\pi^{-1} \circ d\pi(\dot{\alpha}(t))\,dt \\
&= d\pi^{-1} \int d\pi\circ \tPi_{\alpha(t)}^p \circ d\pi^{-1} (\dot{\gamma}(t))\,dt \\
&= d\pi^{-1} \pi^* \int \Pi_{\gamma,\gamma(t)}^{\pi(p)} (\dot{\gamma}(t))\,dt,
\end{split}
\]
where in the last passage we used \eqref{eq:projPi}.
\end{proof}
%%%%%%

%%%%%%%%%%%%%%%%%%%%%%%%%%%%%%%%%%%%%%%
\section{Examples}
\label{sec:examples}

In this section we examine a number of classical  defects using the formalism developed in the previous section.

%%%%%%%%%%%%%%%
\subsection{Disclinations}

Disclinations are two-dimensional line defects, i.e., the locus of the defect is a straight line, and the intrinsic geometry of the body is axially symmetric. Isolated disclinations are not common in crystals due to their high energetic cost, but are more common in quasi-two-dimensional systems, such as monolayers of liquid crystals.

Disclinations as topological defects were first introduced by Volterra \cite{Vol07} using the cut-and-weld procedure; see Figure~\ref{fig:disclination}$a$. There are two types of disclinations: positive disclinations, in which a cylindrical wedge is removed and the faces of the cut are welded, and negative disclinations, in which a cylindrical wedge is inserted after a half-plane has been cut (disclinations are often called wedge defects). The sign of the disclination is dictated by the sign of the Gaussian curvature at its locus. In crystals, disclinations are also characterized by either a missing wedge or an extra wedge, in which case the disclination angle is determined by the structure of the unperturbed lattice; see Figure~\ref{fig:disclination}$b$,$c$.

\begin{figure}
\begin{center}
\includegraphics[height=2in]{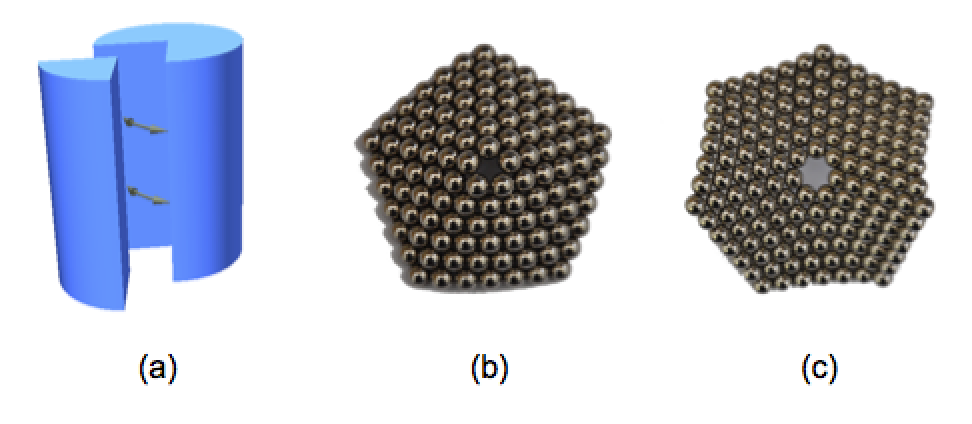}
\end{center}
\caption{(a) Sketch of a positive disclination following Volterra's cut-and-weld procedure.
(b) Positive disclination in an hexagonal lattice; one site has 5 neighbors and its center is a source of positive Gaussian curvature.
(c) Negative disclination in an hexagonal lattice; one site has 7 neighbors and its center is a source of negative Gaussian curvature.
}
\label{fig:disclination}
\end{figure}

Adopting the terminology of the present paper,
disclinations are singular defects, in which the body $\M$ is hemeomorphic to the three-dimensional Euclidean space with a line removed. The fundamental group is isomorphic to the additive group $\pi_1(\M) \cong \bbZ$, where an integer $k\in\bbZ$ corresponds to a loop that surrounds the disclination line $k$ times; the sign of $k$ determines the handedness of the loops. In particular, the fundamental group is abelian so that the results of Section~\ref{sec:abelian} apply.

We parametrize $\M$ using cylindrical coordinates $(R,\Phi,Z)$, with $R>0$, and identifying $\Phi=0$ and $\Phi=2\pi$. The metric on $\M$ can be defined in several equivalent ways:

\begin{enumerate}
\item
{\bfseries Local charts}\,\,
The first way is to construct local charts $\{(U_\beta,\vp_\beta)\}$ and define the metric on $\M$ as the pullback metric, $\g = \vp_\beta^\star\euc$.
Let $\{\Phi_\beta\}$ be a collection of angles. We define local charts
\[
(x_\beta,y_\beta,z_\beta) = \vp_\beta(R,\Phi,Z),
\]
where
\beq
x_\beta= R\,\cos\alpha(\Phi - \Phi_\beta)
\qquad
y_\beta = R\,\sin\alpha(\Phi - \Phi_\beta)
\Textand
z_\beta = Z,
\label{eq:disclination_chart}
\eeq
where $\alpha>0$ ($\alpha>1$ for negative disclinations and  $0<\alpha<1$ for positive disclinations). The range of $\Phi$ has to be smaller than both $2\pi$ and $2\pi/\alpha$.
It can be checked explicitly that all the transition maps $\vp_\beta\circ\vp_\gamma^{-1}$ are rigid rotations.

\item
{\bfseries Orthonormal frame field}\,\,
A second way to define a metric on $\M$ is, following \cite{YG12c}, to introduce a  frame field,
\[
e_1 = \partial_R
\qquad
e_2 = \frac{1}{\alpha R} \partial_\Phi
\qquad
e_3 = \partial_Z,
\]
with dual co-frame,
\[
\cof1 = dR
\qquad
\cof2 = \alpha R\, d\Phi
\qquad
\cof3 = dZ,
\]
and set the metric to be that with respect to which this frame field is  orthonormal, namely,
\[
\g = dR\otimes dR + \alpha^2 R^2\, d\Phi\otimes d\Phi + dZ\otimes dZ.
\]

To show that $(\M,\g)$ is  indeed locally Euclidean we calculate the Riemann curvature tensor using Cartan's formalism.
We first calculate the Levi-Civita connection using \emph{Cartan's first structural equations}.
Introducing an anti-symmetric matrix of connection 2-forms $\Comega\alpha\beta$ satisfying
\[
\nabla_X e_\alpha = \Comega\beta\alpha(X)\, e_\beta,
\]
Cartan's first structural equations are
\[
d\cof\alpha + \Comega\alpha\beta\wedge \cof\beta = 0.
\]
It is easy to check that the only non-zero connection form is
\[
\Comega12 = -\frac{1}{R}\cof2 = -\alpha\,d\Phi.
\]
Note that the fact that the connection form does not vanish implies that the chosen frame field is not parallel with respect to the Riemannian connection.
The curvature form is obtained from \emph{Cartan's second structural equations},
\[
\COmega\alpha\beta = - d\Comega\alpha\beta - \Comega\alpha\gamma\wedge\Comega\gamma\beta.
\]
An explicit substitution shows that the right hand side vanishes, i.e., this Riemannian manifold is locally flat.

\item
{\bfseries Conformal representation}\,\,
A third way of defining a metric  on $\M$ uses the two-dimensional character of disclinations, and the fact that every two-dimensional metric is locally conformal to the Euclidean metric. We adopt a parametrization
\[
\M = \{(X,Y,Z)\in\R^3:\,\, X^2 + Y^2\ne 0 \},
\]
i.e., the $Z$-axis is locus of the disclination. The metric is assumed to be of the following form
\[
\g = e^{2\conf(X,Y)}(dX\otimes dX + dY\otimes dY) + dZ\otimes dZ,
\]
where $\conf(X,Y)$ is the conformal factor.
Liouville's equation states that this metric is locally flat if and only if the Laplacian of $\conf$ vanishes, which implies that flat metrics of this form can be generated by taking $\conf$ to be any harmonic function. A disclination has cylindrical symmetry, and corresponds to
\beq
\conf = \beta\,\log(X^2+Y^2),
\label{eq:conf_disclination}
\eeq
where $\beta$ is the magnitude of the disclination; $\beta>0$ corresponds to the insertion of a wedge, whereas $\beta<0$ corresponds to the removal of a wedge.
\end{enumerate}

We proceed to obtain an explicit expression for parallel transport in $T\M$.
We may use any of the parametrizations introduced above.

\begin{enumerate}
\item
{\bfseries Local charts}\,\,
Take a chart $(U,\vp)$ of the form \eqref{eq:disclination_chart}, with, say,  $\Phi_\beta=0$. Then,
\[
\mymat{\partial_R \\ \partial_\Phi \\ \partial_Z}_{(R,\Phi,Z)} =
\mymat{\cos\alpha\Phi & \sin\alpha\Phi & 0 \\
-\alpha R\,\sin\alpha\Phi & \alpha R\,\cos\alpha\Phi & 0 \\
0 & 0 & 1}
\mymat{\partial_x \\ \partial_y \\ \partial_z}_{(R,\Phi,Z)},
\]
and inversely,
\[
\mymat{\partial_x \\ \partial_y \\ \partial_z}_{(R,\Phi,Z)} =
\mymat{\cos\alpha\Phi & -\frac{1}{\alpha R}\sin\alpha\Phi & 0 \\
\sin\alpha\Phi & \frac{1}{\alpha R}\,\cos\alpha\Phi & 0 \\
0 & 0 & 1}
\mymat{\partial_R \\ \partial_\Phi \\ \partial_Z}_{(R,\Phi,Z)}.
\]
The connection on $T\M$ is the pullback of the Euclidean connection, hence
$(\partial_x,\partial_y,\partial_z)$ is a parallel frame in $TU$. That is,
for $p_0 = (R_0,\Phi_0,Z)$ and $p = (R,\Phi,Z)$,
\[
\Pi_{p}^{p_0} \mymat{\partial_x \\ \partial_y \\ \partial_z}_{p} =
\mymat{\partial_x \\ \partial_y \\ \partial_z}_{p_0}.
\]
Thus, for $v\in T_pU$ of the form
\[
v = v^R\,\partial_R|_p + v^\Phi\,\partial_\Phi|_p + v^Z\,\partial_Z|_p,
\]
we have,
\[
\Pi_p^{p_0}(v) = \mymat{v^R & v^\Phi & v^Z}
\mymat{\cos\alpha(\Phi-\Phi_0) & \frac{1}{\alpha R_0}\sin\alpha(\Phi-\Phi_0) & 0 \\
-\alpha R \sin\alpha(\Phi-\Phi_0) & \frac{R}{R_0}\,\cos\alpha(\Phi-\Phi_0) & 0 \\
0 & 0 & 1}
\mymat{\partial_R \\ \partial_\Phi \\ \partial_Z}_{p_0}.
\]

\item
{\bfseries Conformal representation}\,\,
Since parallel transport along the $Z$ axis is trivial, we focus on the parallel transport of vectors within the $XY$-plane. For that we construct a (local) parallel orthonormal frame. Let $\theta(X,Y)$ be a function to be determined. Any frame field of the form
\beq
\begin{aligned}
e_1 &= e^{-\conf}\brk{\cos\theta\,\partial_X - \sin\theta\,\partial_Y}, \\
e_2 &= e^{-\conf}\brk{\sin\theta\,\partial_X + \cos\theta\,\partial_Y}.
\end{aligned}
\label{eq:par_frame}
\eeq
is orthonormal.
The dual co-frame field is
\[
\begin{aligned}
\cof1 &= e^{\conf}\brk{\cos\theta\,dX - \sin\theta\,dY}, \\
\cof2 &= e^{\conf}\brk{\sin\theta\,dX + \cos\theta\,dY}.
\end{aligned}
\]
The condition for the frame field $(e_1,e_2)$ to be parallel is that $d\cof1 = d\cof2=0$.
It is easy to check that this is satisfied if $\conf(X,Y)$ and $\theta(X,Y)$ satisfiy the Cauchy-Riemann equations,
\beq
\pd{\theta}{X} = -\pd{\conf}{Y} \Textand \pd{\theta}{Y} = \pd{\conf}{X}.
\label{eq:CR}
\eeq
For $\conf$ given by \eqref{eq:conf_disclination},
\[
\theta(X,Y) = 2\beta\,\tan^{-1}\brk{\frac{Y}{X}},
\]
which has a branch cut that can be chosen to be along the negative $X$-axis. Thus, to parallel transport a vector $v\in T_p\M$ to another point on the same plane, we first represent it with respect to the basis $\{e_1,e_2\}$. Since this is a parallel frame, the components of the transported vector remain invariant. We will use this approach when we consider edge dislocations.
\end{enumerate}

We proceed to construct the universal cover $\tM$ of $\M$ in order to put into action the formalism derived in the previous section. Since $\M$ is homotopy equivalent to a circle, whose universal cover is a line, the universal cover of $\M$ is homotopy equivalent to the line. A natural parametrization for $\tM$ is
\[
(R,\Theta,Z) \in (0,\infty)\times\R\times\R,
\]
where the projection map $\pi:\tM\to\M$ is
\[
\pi(R,\Theta,Z) = (R,\Theta \mod 2\pi,Z).
\]

\begin{figure}
\begin{center}
\includegraphics[height=3in]{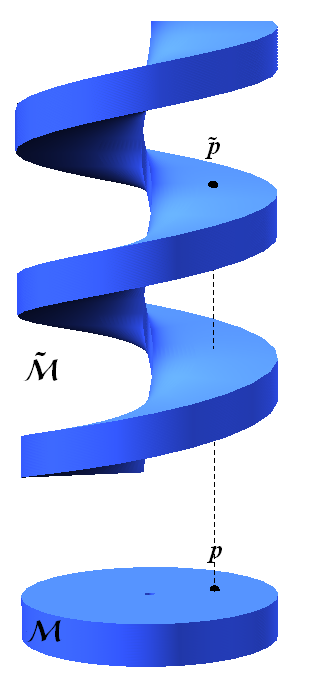}
\end{center}
\caption{The universal cover of the manifold that models a disclination.}
\label{fig:tM1}
\end{figure}

The lift of a loop $\gamma_k\in\pi_1(\M,p)$, $p= (R,\Phi,Z)$, that surrounds the disclination line $k$ times and starts at $(R,\Phi+2\pi\ell,Z)$, ends at $(R,\Phi+2\pi(\ell+k),Z)$. Thus, for $k\in\bbZ$, the corresponding deck transformation is
\[
\vp_k(R,\Theta,Z) = (R,\Theta + 2\pi k,Z)
\]
(the existence of a natural isomorphism from $\pi_1(\M)$ to $\Deck(\tM)$ is due to the former being abelian).

Parallel transport in $T\tM$ is induced by parallel transport in $T\M$.
Let $p = (R,\Theta,Z)$ and $p_0 = (R_0,\Theta_0,Z_0)$.
For a tangent vector $v\in T_p\tM$ of the form
\[
v = v^R\,\partial_R|_p + v^\Theta\,\partial_\Theta|_p + v^Z\,\partial_Z|_p.
\]
we have
\[
\begin{split}
\tPi^{p_0}_{p}(v) &= \mymat{v^R & v^\Theta & v^Z}
\mymat{\cos(\alpha(\Theta-\Theta_0)) & \frac{1}{\alpha R_0}\sin(\alpha(\Theta-\Theta_0)) & 0 \\
-\alpha R\,\sin(\alpha(\Theta-\Theta_0)) & \frac{R}{R_0}\,\cos(\alpha(\Theta-\Theta_0)) & 0 \\
0 & 0 & 1}
\mymat{\partial_R \\ \partial_\Theta \\ \partial_Z}_{p_0}.
\end{split}
\]

With this we proceed to calculate the developing map, $\dev:\tM\to T_p\tM$, for $p = (R_0,\Theta_0,Z_0)$.
Let $\gamma(t) = (R(t),\Theta(t),Z(t))$ be a curve based at $p$.  Then,
\[
\begin{split}
\tPi^{p}_{\gamma(t)} (\dot{\gamma}(t)) &=
\mymat{\dot{R} & \dot{\Theta} & \dot{Z}}
\mymat{\cos(\alpha\Theta) & \sin(\alpha\Theta) & 0 \\
-\alpha R \,\sin(\alpha\Theta) & \alpha R\,\cos(\alpha \Theta) & 0 \\
0 & 0 & 1}
\mymat{\partial_x \\ \partial_y \\ \partial_z}_p \\
&= \deriv{}{t}
\mymat{
R\,\cos(\alpha\Theta)  &
R\,\sin(\alpha\Theta)  &
Z}
\mymat{\partial_x \\ \partial_y \\ \partial_z}_p.
\end{split}
\]
This is easily integrated, yielding
\[
\begin{split}
\dev(R,\Theta,Z) &= \Brk{R\,\cos(\alpha\Theta) - R_0\,\cos(\alpha\Theta_0)}\partial_x|_p \\
&+ \Brk{R\,\sin(\alpha\Theta)  - R_0\,\sin(\alpha\Theta_0)}\partial_y|_p \\
&+ (Z - Z_0)\,\partial_z|_p.
\end{split}
\]
We henceforth represent the developing map as a column vector whose entries are the components of the parallel frame $(\partial_x,\partial_y,\partial_z)$,
\[
\dev(R,\Theta,Z) = \mymat{R\,\cos(\alpha\Theta) - R_0\,\cos(\alpha\Theta_0) \\
R\,\sin(\alpha\Theta)  - R_0\,\sin(\alpha\Theta_0) \\
Z - Z_0}.
\]

From this we easily calculate the monodromy. Let $k\in\bbZ$. Then for $q=(R,\Theta,Z)$,
\[
\dev(\vp_k(q)) = \dev(R,\Theta + 2\pi k,Z) =
A_k \,\dev(q) + b_k,
\]
where
\[
A_k =
\mymat{
\cos 2\pi\alpha k  & -\sin 2\pi\alpha k  & 0 \\
\sin 2\pi\alpha k & \cos 2\pi\alpha k   & 0\\
0 & 0 & 1},
\]
and
\[
b_k = \brk{A_k - I} \mymat{R_0\,\cos(\alpha\Theta_0) \\ R_0\,\sin(\alpha\Theta_0) \\ 0}.
\]
As expected, the linear part $A_k$ of the monodromy is a parallel section of $\End(T\tM)$ (its representation in a parallel frame does not depend on the coordinates). It is a rotation by an angle $2\pi\alpha k$ about the $Z$-axis. Unless $A_k-I=0$ the translation part $b_k$ is not a parallel vector field.

%%%%%%%%%%%%%%%
\subsection{Screw dislocations}

Screw dislocations are also line defects, and like disclinations, their intrinsic geometry is axially symmetric.  Like disclinations, screw dislocations were first introduced by Volterra (albeit the term was only coined later). The Volterra procedure for creating a screw dislocation is to  cut a half-plane in the body and weld it with a fixed  offset parallel to the half-plane's boundary; see Figure~\ref{fig:screw}$a$. A visualization of a screw dislocation in a lattice is shown in  Figure~\ref{fig:screw}$b$.

\begin{figure}
\begin{center}
\includegraphics[height=2.4in]{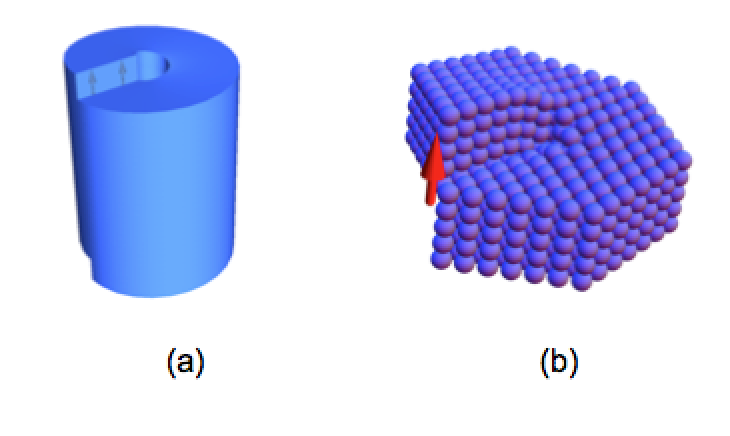}
\end{center}
\caption{(a) Sketch of a screw dislocation following Volterra's cut-and-weld procedure.
(b) Screw disclocation in a lattice structure. }
\label{fig:screw}
\end{figure}

The topology of a body with a screw dislocation is identical to that of a body with a disclination; it is only the metric that differs. We parametrize $\M$ using the same cylindrical coordinates $(R,\Phi,Z)$ as for disclinations.

Once again, there are several alternatives for prescribing the metric on $\M$:

\begin{enumerate}
\item
{\bfseries Local charts}\,\,
Let
\[
x = R\,\cos(\Phi - \Phi_\beta) \qquad
y = R\,\sin(\Phi - \Phi_\beta)
\Textand
z = Z - h(\Phi - \Phi_\beta),
\]
where $h$ is a fixed offset and $\Phi\in(0,2\pi)$. The metric on $\M$ is then the pullback metric.

\item
{\bfseries Orthonormal frame field}\,\,
Following \cite{YG12}, we define an orthonormal  frame field,
\[
e_1 = \partial_R \qquad
e_2 = \frac{1}{R}\partial_\Phi + \frac{h}{R} \partial_Z \qquad
e_3 = \partial_Z.
\]
The dual co-frame is
\[
\cof1 = dR \qquad \cof2 = R\,d\Phi \qquad \cof3 = dZ - h\,d\Phi.
\]
The metric with respect to which this frame field is orthonormal is:
\[
\g = dR\otimes dR + (R^2 + h^2)\, d\Phi\otimes d\Phi + dZ \otimes dZ
- h\,(d\Phi\otimes dZ + dZ\otimes d\Phi).
\]
To show that $(\M,g)$ is locally Euclidean we use again Cartan's formalism.  The only non-zero connection form is
\[
\Comega12 = - d\Phi,
\]
so $\COmega\alpha\beta=0$.
\end{enumerate}

To obtain an explicit formula for the parallel transport of vectors, we follow a procedure similar to that for disclinations. For $p=(R,\Phi,Z)$,
\[
\mymat{\partial_R \\ \partial_\Phi \\ \partial_Z}_p =
\mymat{\cos\Phi & \,\sin\Phi & 0 \\
-R\,\sin\Phi &  R\,\cos\Phi & -h \\
0 & 0 & 1}
\mymat{\partial_x \\ \partial_y \\ \partial_z}_p,
\]
and conversely,
\[
\mymat{\partial_x \\ \partial_y \\ \partial_z}_p =
\mymat{\cos\Phi & -\frac{1}{R} \sin\Phi & -\frac{h}{R} \sin\Phi \\
\sin\Phi &  \frac{1}{R}\cos\Phi & \frac{h}{R} \cos\Phi \\
0 & 0 & 1}
\mymat{\partial_R \\ \partial_\Phi \\ \partial_Z}_p.
\]
Since $(\partial_x,\partial_y,\partial_z)$ is a parallel frame, it follows that
for $v\in T_pU$,  of the form
\[
v = v^R\,\partial_R|_p + v^\Phi\,\partial_\Phi|_p + v^Z\,\partial_Z|_p,
\]
we have
\[
\Pi_p^{p_0}(v) = \mymat{v^R & v^\Phi & v^Z}
\mymat{\cos(\Phi-\Phi_0) & \frac{1}{R_0} \sin(\Phi-\Phi_0) & \frac{h}{R_0} \sin(\Phi-\Phi_0) \\
-R \sin(\Phi-\Phi_0) &  \frac{R}{R_0}\cos(\Phi-\Phi_0) & -h + \frac{hR}{R_0} \cos(\Phi-\Phi_0) \\
0 & 0 & 1}
\mymat{\partial_R \\ \partial_\Phi \\ \partial_Z}_{p_0}.
\]

The universal cover of $\M$ is constructed identically  to disclinations, taking the open half-space, along with the projection
\[
\pi(R,\Theta,Z) = (R,\Theta\mod2\pi,Z).
\]
For $p=(R,\Theta,Z)$, $p_0 =(R_0,\Theta_0,Z_0)$, and a tangent vector $v\in T_p\tM$ of the form
\[
v = v^R\,\partial_R|_p + v^\Theta\,\partial_\Theta|_p + v^Z\,\partial_Z|_p,
\]
we have,
\[
\begin{split}
\tPi^{p_0}_{p}(v)  = \mymat{v^R & v^\Theta & v^Z}
\mymat{\cos(\Theta-\Theta_0) & \frac{1}{R_0} \sin(\Theta-\Theta_0) & \frac{h}{R_0} \sin(\Theta-\Theta_0)  \\
- R\,\sin(\Theta-\Theta_0) &  \frac{R}{R_0}\,\cos(\Theta-\Theta_0) &   \frac{hR}{R_0}\,\cos(\Theta-\Theta_0) - h \\
0 & 0 & 1}
\mymat{\partial_R \\ \partial_\Theta \\ \partial_Z}_{p_0}.
\end{split}
\]

We then calculate the developing map, $\dev:\tM\to T_p\tM$, $p = (R_0,\Theta_0,Z_0)$.
Let $\gamma(t) = (R(t),\Theta(t),Z(t))$ be a curve based at $p$.  Then,
\[
\begin{split}
\tPi^{p}_{\gamma(t)}(\dot{\gamma}(t)) &=
\mymat{
\dot{R} \cos\Theta - \dot{\Theta} R\,\sin\Theta &
\dot{R} \sin\Theta + \dot{\Theta} R\, \cos\Theta &
\dot{Z} - h \dot{\Theta}}
\mymat{\partial_x \\ \partial_y \\ \partial_z}_p
\\
&=
\deriv{}{t} \mymat{R\,\cos\Theta & R\,\sin\Theta & Z - h\Theta}
\mymat{\partial_x \\ \partial_y \\ \partial_z}_p.
\end{split}
\]
Writing the developing map as a column vector whose entries are the components of the parallel frame $(\partial_x,\partial_y,\partial_z)$,
\[
\dev(R,\Theta,Z) = \mymat{R\,\cos\Theta-R_0\,\cos\Theta_0 \\ R\,\sin\Theta - R_0\,\sin\Theta_0 \\ (Z - Z_0) - h(\Theta - \Theta_0)}.
\]

We turn to calculate the monodromy. For $k\in\bbZ$ and $q=(R,\Theta,Z)$,
\[
\dev(\vp_k(q)) = A_k\,\dev(q) + b_k,
\]
where
\[
A_k = I
\Textand
b_k = \mymat{0 \\ 0 \\ -2\pi h k}.
\]
Thus, for a screw dislocation the linear part of the monodromy is the identity section of $\End(T\tM)$, and therefore, as expected, the components of the translational part are independent of $q$, namely, the translational part is a parallel vector field. Moreover, it is a vector field parallel to the $z$-axis, i.e., parallel to the locus of the dislocation, as is expected for the Burgers vector of a screw dislocation.

%%%%%%%%%%%%%%%
\subsection{Edge dislocations}

Edge dislocations, like disclinations, are planar defects, that is the geometry of the body is axially symmetric.
The Volterra cut-and-weld procedure that generates an edge dislocation is depicted in Figure~\ref{fig:edge}$a$.
Like in a screw dislocation, the body is cut by a half-plane, however  it is welded with a fixed offset perpendicular to the half-plane's boundary. If $b$ denotes the fixed offset, then the locus of the dislocation has to be a slit whose length is $b$. A visualization of an edge dislocation in a lattice is shown in Figure~\ref{fig:edge}$b$. It is created by an extra half-plane of atoms inserted through the lattice, distorting nearby planes of atoms. Note that it is not a priori clear why both visualizations correspond to the same type of defect.

\begin{figure}
\begin{center}
\includegraphics[height=2.4in]{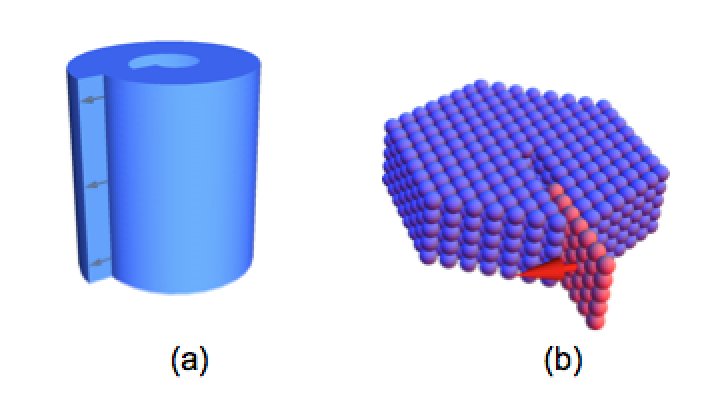}
\end{center}
\caption{(a) Sketch of an edge dislocation following Volterra's cut-and-weld procedure. (b) Edge dislocation in a lattice structure.}
\label{fig:edge}
\end{figure}

Another description of an edge dislocation is as a pair of wedge disclinations of opposite magnitudes. Thus, if a disclination is viewed as a two-dimensional point charge of Gaussian curvature, an edge dislocation should be viewed as a dipole of Gaussian curvature. In an hexagonal lattice, the common occurrence of a wedge-anti-wedge pair is in the form of a pentagon-heptagon pair \cite{SN88}; see Figure~\ref{fig:5-7}.

\begin{figure}
\begin{center}
\includegraphics[height=2in]{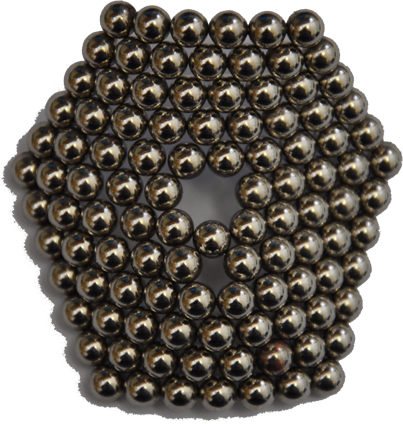}
\end{center}
\caption{A pentagon-heptagon pair in an hexagonal lattice.}
\label{fig:5-7}
\end{figure}

Since the geometry of this defect is two-dimensional we will limit our analysis to a plane, which we parametrize using the coordinates $(X,Y)$.
We define the metric of an edge dislocation using a two-dimensional conformal representation,
\[
\g = e^{2\conf(X,Y)}\brk{dX\otimes dX + dY\otimes dY}.
\]
Recall that such a metric is locally-Euclidean if and only if $\conf$ is harmonic. Since we model an edge dislocation as a wedge anti-wedge pair, the locus of the defect is a pair of parallel lines (points in two dimensions), which we take to be
\[
\{p_1=(-a,0), p_2=(a,0)\}.
\]
A conformal factor that corresponds to two disclinations of opposite signs is,
\beq
\conf(X,Y) = \beta\BRK{\log[(X-a)^2 + Y^2] - \log [(X+a)^2 + Y^2]}.
\label{eq:conf_edge}
\eeq
Note that the coordinates $(X,Y)$ are not Euclidean coordinates, hence the distance between the two defects lines is not $2a$, but rather
\[
\int_{-a}^a e^{\conf(X,0)} \, dX =
a \int_{-1}^1 \brk{\frac{X-1}{X+1}}^{2\beta}\, dX.
\]
This distance is finite for $-1/2<\beta<1/2$.

To obtain explicit expressions for  parallel transport  in $T\M$ we solve the Cauchy-Riemann equations \eqref{eq:CR} for $\conf$ given by \eqref{eq:conf_edge}, yielding
\[
\theta(X,Y) = 2\beta \BRK{\tan^{-1}\brk{\frac{Y}{X-a}} - \tan^{-1}\brk{\frac{Y}{X+a}} },
\]
which can be defined as a smooth function with a branch cut on the segment along the $X$ axis that connects the two loci of the defect. $\theta$ has a jump discontinuity of magnitude $4\pi\beta$ across the branch cut (See Figure~\ref{fig:conf_theta}).

\begin{figure}
\begin{center}
\includegraphics[height=1.5in]{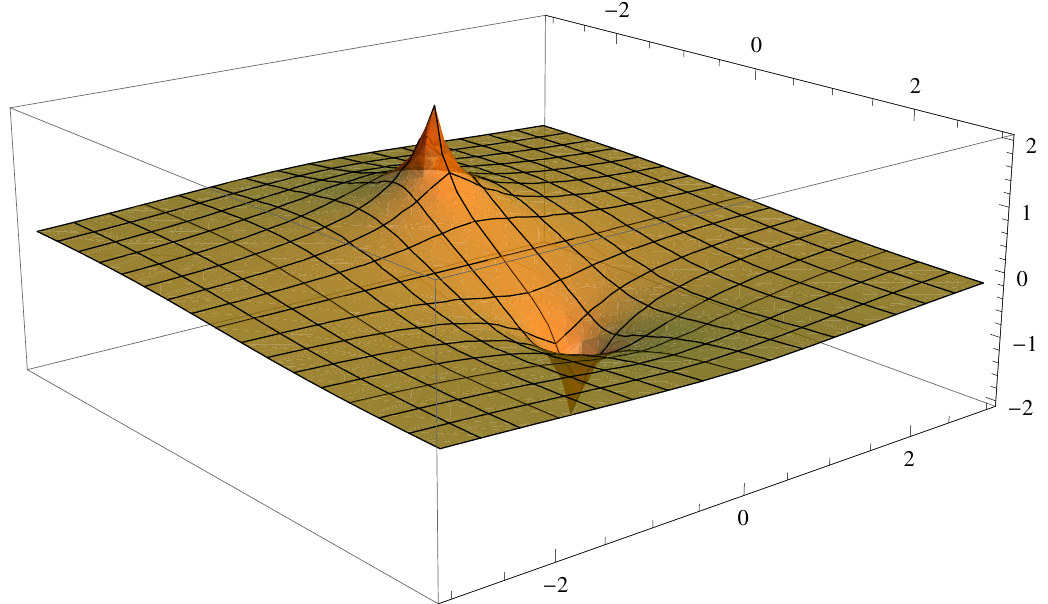}
\hspace{1cm}
\includegraphics[height=1.5in]{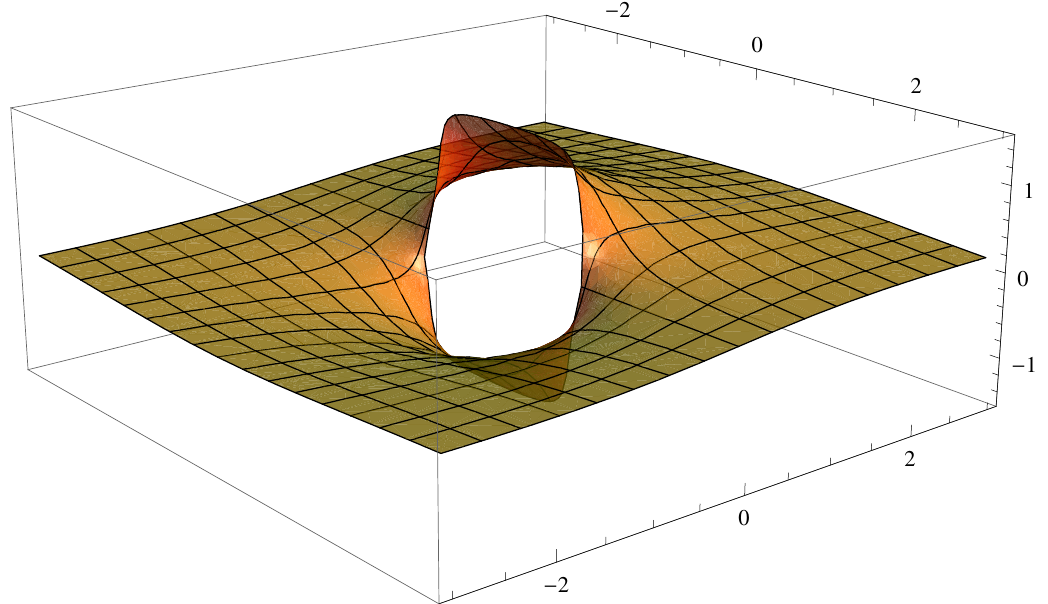}
\end{center}
\caption{$\conf(X,Y)$ and $\theta(X,Y)$ for $\beta=1/4$ and $a=1$.}
\label{fig:conf_theta}
\end{figure}

With the aid of $\theta(X,Y)$, we can prescribe how to parallel transport vectors as long as they do not cross the branch cut.
The frame field $\{e_1,e_2\}$ given by \eqref{eq:par_frame} is parallel. Note that the fact that there exists a global parallel frame field at the exterior of a bounded domain indicates the existence of  distant parallelism, which is a characteristic of edge dislocations.

Inverting \eqref{eq:par_frame} we get
\[
\begin{aligned}
\partial_X &= e^{\conf}(\cos\theta\,e_1 + \sin\theta\,e_2) \\
\partial_Y &= e^{\conf}(-\sin\theta\,e_1 + \cos\theta\,e_2).
\end{aligned}
\]
Thus, for $p=(X,Y)$ and a tangent vector $v\in T_p\M$ written as
\[
\begin{split}
v &= v^X \,\partial_X|_p + v^Y\,\partial_Y|_p \\
&= e^{\conf(p)} \brk{v^X\cos\theta(p) -v^Y\sin\theta(p)}\,e_1|_p + e^{\conf(p)} \brk{v^X\sin\theta(p) + v^Y\cos\theta(p)}\,e_2|_p,
\end{split}
\]
the parallel transport of $v$ to $p_0 = (X_0,Y_0)$ via a curve that does not pass between the two loci of the defect yields,
\[
\begin{split}
\Pi_p^{p_0}(v) &= e^{\conf(p)} \brk{v^X\cos\theta(p) -v^Y\sin\theta(p)}\,e_1|_{p_0} + e^{\conf(p)} \brk{v^X\sin\theta(p) + v^Y\cos\theta(p)}\,e_2|_{p_0} \\
&= e^{\conf(p) - \conf(p_0)} \brk{v^X\cos\theta(p) -v^Y\sin\theta(p)}\brk{\cos\theta(p_0) \partial_X|_{p_0} - \sin\theta(p_0)\partial_Y|_{p_0}} \\
&\,\,\,+ e^{\conf(p) - \conf(p_0)} \brk{v^X\sin\theta(p) + v^Y\cos\theta(p)} \brk{\sin\theta(p_0) \partial_X|_{p_0}  + \cos\theta(p_0)\partial_Y|_{p_0}}.
\end{split}
\]
In particular, for later use
\begin{equation}\label{eq:ptx}
\begin{split}
\Pi_p^{p_0}(\partial_X|_p) &= e^{\conf(p) - \conf(p_0)} \brk{ \cos(\theta(p) - \theta(p_0)) \partial_X|_{p_0} +  \sin(\theta(p) - \theta(p_0))\partial_Y|_{p_0}} .
\end{split}
\end{equation}

We now turn to calculate the monodromy of an edge dislocation. The fundamental group of a plane with two punctures is the free group on two generators. Namely, the first generator is the homotopy class of loops that circle once around the point $(-a,0)$ but do not encircle the other point. The other generator is the homotopy class of loops that circle once around the point $(a,0)$. Indeed, the doubly punctured plane is homotopic to a figure-of-eight. Since we are interested in the structure induced by both defect lines together rather than each separately, we redefine the defect locus $\calD$ to be the closed segment $[-a,a]\times\{0\}$. In particular, since the fundamental group is abelian, we can associate the monodromy with sections over $\M$.

The existence of a ``distant" parallel frame field implies that the linear part of the monodromy is trivial, namely
\[
A_{g} = \text{Id}.
\]
To calculate the translational part of the monodromy, we take an arbitrary reference point $p_0$, denote $\conf_0 = \conf(p_0)$, $\theta_0 = \theta(p_0)$, and take the contour shown in Figure~\ref{fig:contour}. This contour represents the generator $g$ of the fundamental group. Then, using equation~\eqref{eq:ptx}, we obtain
\[
\begin{split}
b_{g}|_{p_0} &= e^{-\conf_0} \dist(p_1,p_2) \Brk{\cos(2\pi\beta-\theta_0) - \cos(2\pi\beta + \theta_0)} \partial_X|_{p_0} \\
&+ e^{-\conf_0} \dist(p_1,p_2) \Brk{\sin(2\pi\beta-\theta_0) + \sin(2\pi\beta + \theta_0)} \partial_Y|_{p_0} \\
&= 2 e^{-\conf_0} \dist(p_1,p_2) \sin (2\pi\beta) \brk{\sin \theta_0 \,\partial_X|_{p_0}  +\cos\theta_0  \,\partial_Y|_{p_0}}.
\end{split}
\]
For $|p_0|\gg1$ we get,
\[
b_{g}|_{p_0} \approx 2 \dist(p_1,p_2) \sin (2\pi\beta) \, \partial_Y|_{p_0}.
\]

\begin{figure}
\begin{center}
\includegraphics[height=2in]{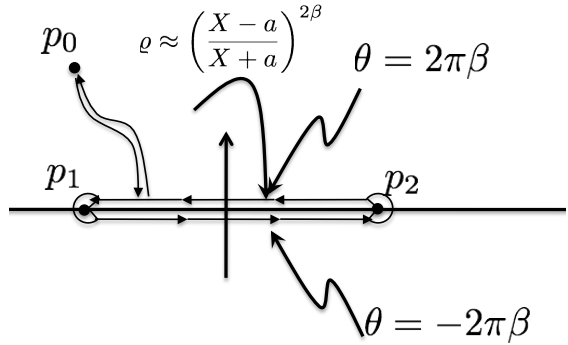}
\end{center}
\caption{The contour used in the calculation of the translational part of the monodromy of an edge dislocation.}
\label{fig:contour}
\end{figure}

As expected, $b_{g}$ is a parallel vector field, whose magnitude depends on the distance between the two disclinations and on the magnitude of the disclinations.

%%%%%%%%%%%%%%%

%%%%%%%%%%%%%%%
\subsection{Higher multipoles: the Stone-Wales defect}

Disclinations and edge dislocations are two-dimensional defects;
disclinations are generated by point sources (monopoles) of Gaussian curvature, whereas edge dislocations are generated by dipoles of Gaussian curvatures. Like in the electrostatic analog, the dipole moment is independent of the reference point if and only if the monopole vanishes. Unlike in the electrostatic analog, the dipole moment can be calculated exactly and not just asymptotically at infinity. This is quite interesting as the Liouville equation that relates the Gaussian curvature to the Laplacian of the conformal factor can be viewed as a nonlinear analog of the linear Gauss equation in electrostatics. Therefore, its solutions are expected to be less amenable to explicit computations.

In analogy to electrostatics, one is also interested in  defects that are higher order multipoles of Gaussian curvature. An example of such a defect is the Stone-Wales defect found in graphene, where four hexagonal cells transform into two pentagon-heptagon pairs \cite{SN88} (Figure~\ref{fig:stonewales}). Since pentagons and heptagons constitute disclinations of opposite signs, and since the configuration of the two  pentagon-heptagon pairs is anti-linear, the Stone-Wales defect constitutes a metric quadrupole.

\begin{figure}
\begin{center}
\includegraphics[height=1.8in]{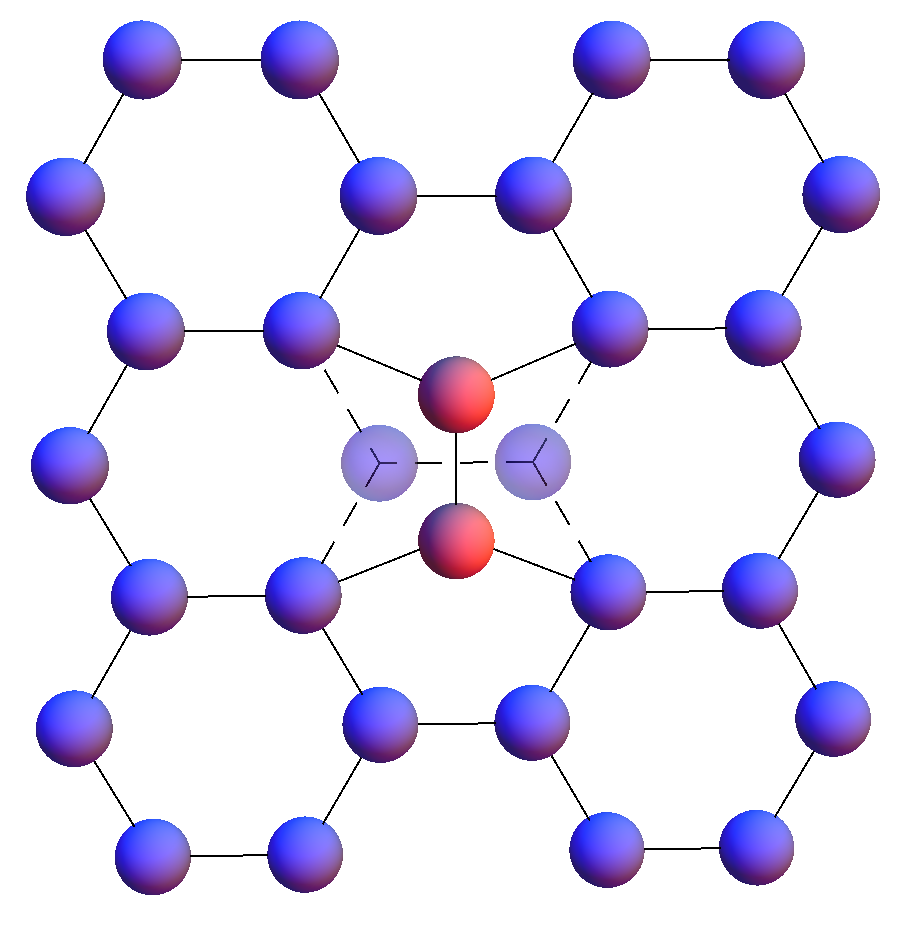}
\end{center}
\caption{The Stone-Wales defect: four hexagons are converted into two pentagon-heptagon pairs. }
\label{fig:stonewales}
\end{figure}

Suppose we removed a part of the lattice that contains the two pentagon-heptagon pairs. The remaining lattice would be perfect, i.e., isometrically embeddable in the plane. This is obvious as the Stone-Wales defect can be eliminated by a purely local change of lattice connectivities. This is not the case for a single pentagon-heptagon pair, where the defect can be detected at any distance from its locus. Thus, we expect a fundamental difference between metric monopoles and dipoles on the one hand, and higher metric multipoles on the other hand.

Motivated by the Stone-Wales defect, we consider a metric quadrupole, which can be realized using a two-dimensional conformal representation with four disclinations,
\beq
\conf(X,Y) = \beta \sum_{i=1}^4  s_i \,\log[(X-X_i)^2 + (Y-Y_i)^2],
\label{eq:SW}
\eeq
where $s_1 = -s_2 = -s_3 = s_4=1$, and $(X_1,Y_1)=(a,a)$, $(X_2,Y_2)=(-a,a)$, $(X_3,Y_3)=(a,-a)$, and $(X_4,Y_4)=(-a,-a)$. Correspondingly, the angle $\theta$ between the parallel frame and the parametric frame is
\[
\theta(X,Y) = 2\beta \sum_{i=1}^4  s_i \,\tan^{-1}\brk{\frac{Y-Y_i}{X-X_i}}.
\]
The branch cuts of the four addends can be chosen such that $\theta$ is smooth outside the rectangle whose vertices are the singular points,
see Figure~\ref{fig:stonewales2}.

\begin{figure}
\begin{center}
\includegraphics[height=2in]{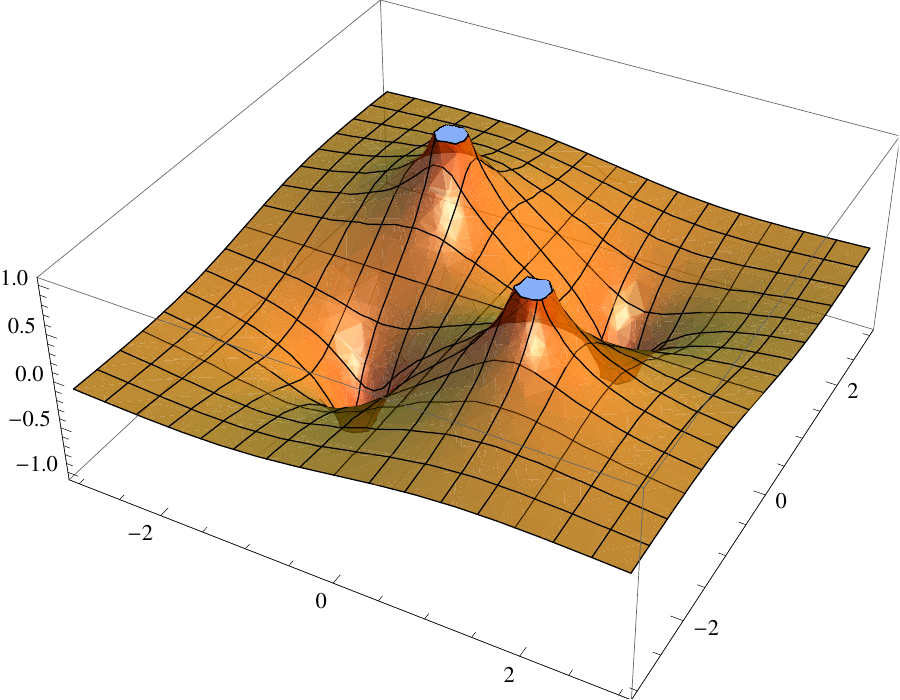}
\includegraphics[height=2in]{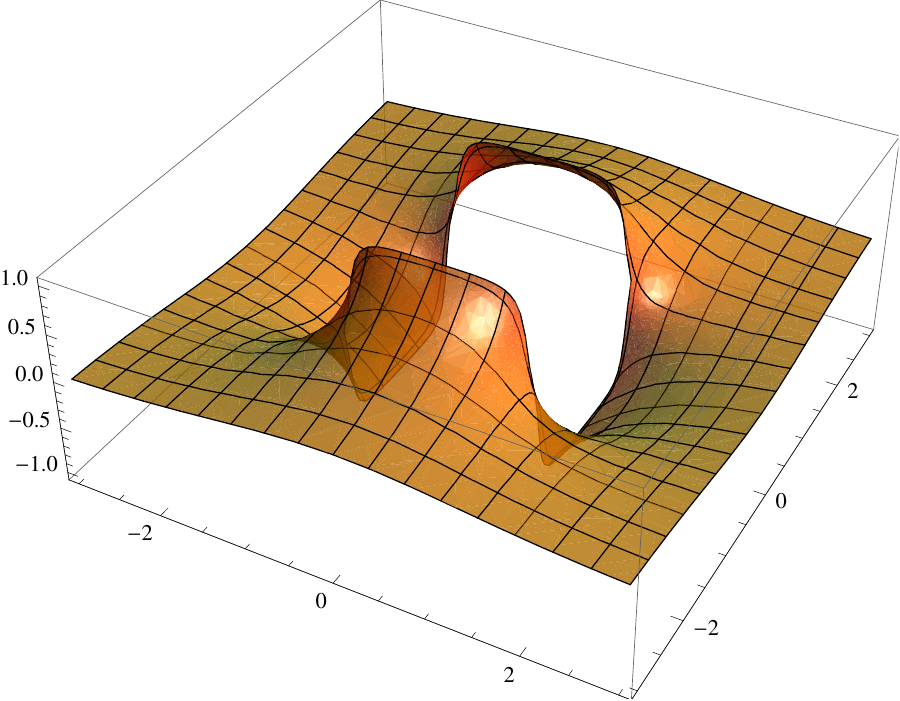}
\end{center}
\caption{$\conf(X,Y)$ and $\theta(X,Y)$ for a metric quadrupole with $\beta=1/4$ and $a=1$}
\label{fig:stonewales2}
\end{figure}

Like for a doubly-punctured plane, we will restrict our attention to loops that encircle all four singular points. Namely, we redefine the defect locus $\calD$ to be the rectangle $[-a,a]\times[-a,a].$ Then symmetry considerations show that  the monodromy is trivial. This implies, as we  show, that the defect is local in the following sense: there exists compact subset $K\subset\M$ such that $\M\setminus K$ embeds isometrically in the plane. In other words, except for a bounded region around the loci of the defect, the entire surface can be embedded isometrically (not only locally) in the Euclidean plane.

%%%%%%%%%
\begin{theorem}
Let $\M$ be $\bbR^2\setminus [-a,a]^2$ equipped with the Riemannian metric determined by the conformal factor \eqref{eq:SW}. Then there exists a compact set $K\subset\M$ such that $\M\setminus K$ embeds in the plane.
\end{theorem}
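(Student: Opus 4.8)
The plan is to promote the developing map of $\M$ to a globally defined local isometry into $\bbC$ and then show it is injective outside a large disk. First I would record the topological input: since $\M=\bbR^2\setminus[-a,a]^2$ is homotopy equivalent to a circle, $\pi_1(\M)\cong\bbZ$ is abelian, so the results of Section~\ref{sec:abelian} apply and the monodromy is captured by the single pair $(A_g,b_g)$ attached to a generator $g$. The symmetry of the configuration \eqref{eq:SW} gives $A_g=\text{Id}$ and $b_g=0$, i.e.\ trivial monodromy. Feeding this into the formula $\dev(\vp_g(q))=A_g\dev(q)+b_g$ shows $\dev\circ\vp_g=\dev$, so $\dev$ is invariant under the entire (cyclic) deck group and descends to a single-valued map $F:\M\to T_p\tM\cong\bbC$. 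Because $\dev$ is a local isometry and $\pi$ is a local isometry, $F$ is a local isometry from $(\M,\g)$ to the Euclidean plane.

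Next I would make $F$ explicit in the conformal coordinate $w=X+iY$. Writing $h=\conf+i\theta$, the Cauchy--Riemann relations \eqref{eq:CR} say precisely that $h$ is holomorphic, and assembling the addends of \eqref{eq:SW} gives the clean expression $h(w)=2\beta\sum_{i=1}^4 s_i\log(w-P_i)$ with $P_i=X_i+iY_i$; this is single-valued on $\M$ because $\sum_i s_i=0$. Transporting $\dot\gamma=\dot X\,\partial_X+\dot Y\,\partial_Y$ to the base point in the parallel frame \eqref{eq:par_frame} identifies the developing-map integrand with $e^{h}\,\dot w$, so that $F(w)=\text{const}+\int^{w}e^{h(\zeta)}\,d\zeta$. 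In particular $F'=e^{h}$ is nowhere zero and $|F'|=e^{\conf}$, which reconfirms directly that $F$ is an orientation-preserving local isometry; single-valuedness of $F$ is exactly the statement $\oint e^{h}\,dw=b_g=0$.

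The heart of the argument is the behaviour at infinity. Expanding $\log(w-P_i)=\log w-P_i/w-P_i^2/(2w^2)+\cdots$ and using the vanishing of the monopole $\sum_i s_i=0$ and of the dipole $\sum_i s_i P_i=0$, the first surviving term is the quadrupole, so $h(w)=O(1/w^2)$ as $w\to\infty$. Hence $e^{h}=1+O(1/w^2)$ has no $1/w$ term (which again forces $\oint e^{h}\,dw=0$, the residue at infinity being $b_g$), and integrating gives $F(w)=w+C+O(1/w)$ for a constant $C$; thus $F$ has a simple pole at infinity. Finally, univalence of $F$ near infinity follows by the change of variable $u=1/w$: the function $\phi(u)=1/(F(1/u)-C)=u+O(u^3)$ extends holomorphically across $u=0$ with $\phi'(0)=1\ne0$, hence is injective on a disk $|u|<\e$, which means $F$ is injective on $\{|w|>R\}$ with $R=1/\e$. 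Taking $K=\{w\in\M:|w|\le R\}$, which is compact, the restriction $F|_{\M\setminus K}$ is an injective local isometry, and an injective local diffeomorphism onto its image is an embedding; this yields the desired isometric embedding of $\M\setminus K$ into the plane.

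The step I expect to require the most care is the passage from \emph{local} isometry to \emph{global} embedding on $\M\setminus K$: a local isometry is automatically a local diffeomorphism, but injectivity is a genuinely global assertion. The asymptotic normalisation $F(w)=w+C+O(1/w)$---which depends crucially on both the monopole and the dipole cancelling, i.e.\ on the configuration being a genuine quadrupole---is precisely what rescues injectivity at infinity through the simple-pole/univalence argument; without the dipole cancellation one would be left with a single pentagon--heptagon pair, whose nonzero $b_g$ already obstructs the single-valuedness of $F$.
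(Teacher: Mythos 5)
Your proof is correct and takes essentially the same route as the paper's: use trivial monodromy to descend the developing map to $\M$, express it as the holomorphic integral $\int e^{\tconf}\,dw$ of the complexified conformal factor, and establish univalence near infinity by composing with an inversion and invoking the inverse function theorem, which yields the compact set $K$ as the preimage of a large disk. The only difference is one of bookkeeping: you exclude the logarithmic term in the expansion at infinity by explicitly verifying that the monopole and dipole moments of \eqref{eq:SW} vanish, whereas the paper obtains the same Laurent expansion from the single-valuedness of $\tdev$ and the nonvanishing limit of $\tdev'$ at infinity.
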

%%%%%%%%

%%%%%%%
\begin{proof}
We will exploit the relation between conformal coordinates and  complex manifolds. Denote $Z = X + i\, Y$. Since the monodromy is trivial, the developing map descends to a map
\[
\dev : \M\to T_{p_0}\M,
\]
for any $p_0\in\M$. If
\[
\dev(X,Y) = x(X,Y)\,\partial_X|_{p_0} + y(X,Y)\,\partial_Y|_{p_0},
\]
then set $\tdev:\bbC\to\bbC$ to be
\[
\tdev(Z) = x(X,Y) + i\, y(X,Y).
\]
It is easy to see that
\beq
\tdev(Z) = e^{-\tconf(Z_0)} \int_{Z_0}^Z e^{\tconf(W)} \, dw,
\label{eq:complex_dev}
\eeq
where $\tconf:\bbC\to\bbC$ is the complexified conformal factor,
\[
\tconf(Z) = \varrho + i \theta = 2\beta \log \frac{(Z-Z_{1})(Z+Z_{1})}{(Z-Z_{2})(Z+Z_{2})}.
\]
The trivial monodromy implies that \eqref{eq:complex_dev} is path-independent.

The developing map is a local isometry. It remains to show that up to the possible exclusion of a compact set it is one-to-one, i.e., that $\tdev$ is one-to-one in a neighborhood of infinity.   Note that
\[
\tdev'(Z) = e^{\tconf(Z)-\tconf(Z_0)},
\]
which has a non-zero limit $e^{-\tconf(Z_0)}$ at infinity. Hence,
$\tdev(Z)$ has an expansion
\[
\tdev(Z) = e^{-\tconf(Z_0)}Z  + \sum_{n=0}^\infty \frac{\alpha_n}{Z^n}
\]
in a neighborhood of infinity.
Using the classical inversion,
\[
g(Z) = \frac{1}{\tdev(1/Z)},
\]
we have
\[
g'(Z) =  \frac{\tdev'(1/Z)}{Z^2\,\tdev^2(1/Z)}
\]
hence
\[
\lim_{Z\to0} g'(Z) = \lim_{Z\to\infty}  \frac{\tdev'(Z)}{Z^{-2}\,\tdev^2(Z)} = e^{\tconf(Z_0)}.
\]
It follows that $g'$ has a removable singularity at zero and a non-zero limit. By the inverse function theorem, $g$ is one-to-one in a neighborhood of zero, and therefore $\tdev$ is one-to-one in a neighborhood of infinity. Note that this analysis is applicable for every conformally represented metric with trivial monodromy and a conformal factor that vanishes at infinity.
\end{proof}
%%%%%%

%%%%%%%%%%%%%%%%%%%%%%%%%%%%%%%%%%%%%%
\subsection{Two-dimensional defects with trivial monodromy: the general case}

The result of the previous subsection whereby a locally Euclidean surface with trivial monodromy embeds (excluding a compact set) in the Euclidean plane can be generalized without requiring the existence of a global system of isothermal coordinates. In fact, it can be formulated as a theorem for affine embeddings.

%%%%%%%%%
\begin{theorem}
\label{th:SW}
Let $\M$ be a connected affine manifold with boundary. Suppose that
\begin{enumerate}
\item $\M$ is geodesically complete, i.e., geodesics extend indefinitely unless they hit the boundary.
\item $\pi_1(\M) = \bbZ$.
\item \label{it:circ} $\partial\M$ is homeomorphic to a circle.
\item $\M$ has trivial monodromy.
\item There exists a simple closed curve in $\M$ that is not null homotopic (i.e., not contractible) and has winding number one.
\end{enumerate}
Then, there exists a compact subset $K\subset\M$, such that $\M\setminus K$ embeds affinely in the Euclidean plane.
\end{theorem}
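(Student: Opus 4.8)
The plan is to show that the developing map, once it has descended to $\M$, is itself an affine embedding off a compact set; the conclusion is then immediate, since a local diffeomorphism is an embedding exactly where it is injective. First I set up and reduce the problem. By hypothesis~4 the monodromy is trivial, so $m_{\vp}=\mathrm{id}$ for a generator $\vp$ of $\Deck(\tM)\cong\pi_1(\M)=\bbZ$; hence $\dev\circ\vp=\dev$ and the developing map descends to $\dev:\M\to T_{p_0}\M\cong\R^2$, which by Proposition~\ref{prop:ddev} is a local affine diffeomorphism (its differential is a parallel-transport isomorphism). Because a local diffeomorphism into the oriented plane pulls back an orientation, $\M$ is orientable; fix the orientation making $\dev$ orientation preserving. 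The topology is then forced: a connected oriented surface with a single boundary circle (hypothesis~3) and $\pi_1=\bbZ$ (hypothesis~2) is, by the classification of surfaces, the half-open cylinder $\M\cong S^1\times[0,\infty)$, with boundary $S^1\times\{0\}$ and a single end $t\to\infty$. The goal becomes: produce $T$ with $\dev$ injective on $S^1\times[T,\infty)$, and take $K=S^1\times[0,T]$.

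The engine is geodesic completeness (hypothesis~1). The developed boundary $c_0:=\dev(\partial\M)$ is compact, so fix $\rho$ with $c_0\subset B_\rho$ and set $\Omega_\rho:=\R^2\setminus\overline{B_\rho}$. Completeness gives a straight-line lifting property: for $q\in\Int\M$, any affinely parametrized segment issuing from $\dev(q)$ lifts through $\dev$ to a geodesic of $\M$ issuing from $q$, continuing until it reaches $\partial\M$ or forever. The crucial observation is that a geodesic whose development remains in $\Omega_\rho$ can never meet $\partial\M$, since at a boundary point its development would lie in $c_0\subset\overline{B_\rho}$. Consequently every path in $\Omega_\rho$ lifts (approximate by a polygonal path and lift segment by segment; homotopies lift likewise), with lifts confined to $\Int\M$. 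By the path-lifting characterization of covering maps, $\dev$ restricts to a covering $W\to\Omega_\rho$, where $W:=\dev^{-1}(\Omega_\rho)\subset\Int\M$. A second use of completeness pins down the end: an outgoing geodesic never hits the boundary, so it is defined for all forward time and develops to a genuine ray; since this ray is unbounded, the geodesic must leave every compact subcylinder, so outgoing geodesics simultaneously reach the end and escape to infinity in $\R^2$. Together with the boundedness of $c_0$ this forces $\dev$ to be proper, so $\dev^{-1}(\overline{B_\rho})$ is compact and contains $\partial\M$; hence $S^1\times[T,\infty)\subset W$ for $T$ large.

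Let $W_0$ be the component of $W$ containing the end. It is an annulus covering $\Omega_\rho\simeq S^1$, so $\dev|_{W_0}$ is an $n$-fold cover for some finite $n\ge1$, with $\pi_1(W_0)\to\pi_1(\Omega_\rho)$ equal to multiplication by $n$. A core circle of $W_0$ represents a generator of $\pi_1(\M)$ and, by triviality of the monodromy, develops to an honest closed loop in $\Omega_\rho$; its winding number about the developed hole is exactly the covering degree $n$. Hypothesis~5 asserts precisely that a generating simple closed curve has developed winding number one, i.e. $n=1$. (This is indeed the hypothesis that is needed: for each $n\ge2$ one can build a cylinder satisfying hypotheses~1--4 whose end wraps $n$ times around a punctured plane and is $n$-to-one, so hypothesis~5 is exactly what excludes multiple wrapping.) With $n=1$ the covering $\dev|_{W_0}:W_0\to\Omega_\rho$ is a homeomorphism, hence $\dev$ is injective on $W_0\supset S^1\times[T,\infty)$. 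Taking the compact set $K=S^1\times[0,T]$, the restriction $\dev|_{\M\setminus K}$ is an injective local diffeomorphism, i.e. an affine embedding of $\M\setminus K$ into the Euclidean plane.

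I expect the main obstacle to be the second paragraph: wringing from geodesic completeness both the genuine covering property over $\Omega_\rho$ \emph{and} properness at the end, while correctly accounting for the boundary $\partial\M$. The classical ``complete $\Rightarrow$ developing map is a covering'' mechanism must be run only in the outward directions and combined with the compatibility fact that developments staying outside $\overline{B_\rho}$ cannot touch the boundary; verifying properness (so that the end genuinely lands in $W$, allowing $K$ to be compact) is the most delicate point, whereas the winding-number step that fixes the covering degree is then essentially formal.
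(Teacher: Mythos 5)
Your skeleton is in fact the paper's skeleton: descend $\dev$ to $\M$ via trivial monodromy, use completeness to show $\dev$ restricts to a finite-degree covering over the complement of a disc containing $\dev(\partial\M)$, then use hypothesis~5 to force degree one. But the two steps you yourself identify as pivotal are asserted rather than proved, and they are exactly where the paper's proof does its real work.

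First, properness. The sentence ``together with the boundedness of $c_0$ this forces $\dev$ to be proper'' is a non sequitur. What you establish is that geodesics whose developments are outgoing rays escape both to the end of $\M$ and to infinity in $\R^2$; this shows that \emph{some} points deep in the cylinder have developments far from the origin, not that \emph{all} of them do. Nothing you wrote excludes a sequence $x_n$ escaping to the end with $\dev(x_n)\in\overline{B_\rho}$, and without excluding it there is no $T$ with $S^1\times[T,\infty)\subset W$, so the component $W_0$ ``containing the end'' is not even defined. The paper closes this hole with a different geodesic argument (its step~(vi)): by a Hopf--Rinow theorem for manifolds with boundary, any $q$ with $\dev(q)\in\overline{B_\rho}$ is joined to $\partial\M$ by a geodesic; its development is a straight segment with both endpoints in $\overline{B_\rho}$, hence lies in $\overline{B_\rho}$ by \emph{convexity}, hence has length at most $2\rho$; since $\dev$ preserves length, all of $\dev^{-1}(\overline{B_\rho})$ lies within distance $2\rho$ of the compact set $\partial\M$, and completeness (closed bounded sets are compact) gives compactness. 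Some such argument---minimizing geodesics to the boundary plus convexity of the disc---is needed; ``outgoing geodesics escape'' does not substitute for it.

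Second, the degree-one step conflates two different ``winding numbers.'' The covering degree $n$ equals the \emph{index} of the developed core circle around the hole (a free-homotopy invariant), whereas hypothesis~5 and the theorem's definition refer to the \emph{turning number} of the tangent (a regular-homotopy/isotopy invariant). These are independent for closed curves in an annulus: a curve winding twice around the hole with one small negatively oriented kink has index $2$ and turning number $1$. So from ``the hypothesis-5 curve has winding number one'' you cannot read off $n=1$: an arbitrary core circle of $W_0$ is isotopic to the hypothesis-5 curve and so has turning number $\pm1$, while its development has index $n$, and these are compatible for every $n$. The paper bridges the gap by controlling the curve downstairs: take a round circle $\gamma\subset\R^2\setminus\overline{B_\rho}$, whose $d$-fold iterate has turning number exactly $d$ (Umlaufsatz); its lift is a \emph{simple} closed non-contractible curve in $\M$ with the same turning number $d$; then the classification of simple closed curves on an annulus up to isotopy, plus isotopy invariance of turning number, compares it with the hypothesis-5 curve and yields $d=1$ (the paper's steps~(xviii)--(xxii), citing \cite{FM11}). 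Equivalently, you must choose your ``core circle'' to be $\dev^{-1}(\gamma)\cap W_0$ and still carry out the isotopy comparison; as written, the deduction ``i.e.\ $n=1$'' does not follow. (Your upfront identification $\M\cong S^1\times[0,\infty)$ via the classification of noncompact surfaces is legitimate, though the paper instead derives the annulus structure from the covering and homology, avoiding that heavier citation.)
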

%%%%%%%%

Recall that the winding number of a closed curve in the plane is the total number of times that its tangent rotates. The winding number of a curve in $\M$ is well-defined because tangent vectors can be translated unambiguously to a joint reference point. In fact, the winding number of a closed curve is given by the winding number of its image under the developing map.

%%%%%%%
\begin{proof}
Since the proof is somewhat long and technical, we will break it into short steps:

\renewcommand{\theenumi}{\roman{enumi}}
\renewcommand{\labelenumi}{(\theenumi)}
\begin{enumerate}
\itemsep2pt \parskip0pt \parsep0pt

\item $\M$ is an affine manifold. A locally Euclidean metric can be defined on $\M$ by  prescribing an inner-product on $T_p\M$ for some arbitrary point $p$, and parallel transporting tangent vectors to $p$; Assumption~4 guarantees that parallel transport is path-independent.

\item Geodesic completeness implies metric completeness. This is the well-known Hopf-Rinow theorem (\cite{DoC92}, pp. 146--149). Note  that the classical theorem is for a geodesically complete manifold without boundary. It is not hard to generalize the Hopf-Rinow theorem to manifolds with boundary: A manifold is metrically complete if and only if geodesics extend indefinitely unless they hit the boundary.

\item It follows from Assumption~4 that the developing map descends to a function $\dev :\M\to T_p\M\cong\R^2$, for a reference point $p\in\M$. Indeed, for $q\in\M$ let $\gamma$ be a curve in $\M$ connecting $p$ to $q$. Then
\[
\dev(q) = \int \Pi_{\gamma(t)}^p(\dot{\gamma}(t))\, dt.
\]
The triviality of monodromy implies that this integral does not depend on the chosen curve.

\item\label{it:la} By Proposition~\ref{prop:ddev} $\dev$ is a locally affine map  (also a local isometry), and in particular, a local diffeomorphism in $\Int(\M)$.

\item Even though $\partial\M$ is homeomorphic to a circle, its image under $\dev$ is not necessarily a simple curve. Let $L\subset\R^2$ be an open disc that contains $\dev(\partial\M)$, and let $K = \dev^{-1}(L)$.

\item $\overline{K}$ is bounded and connected by the following argument: By completeness and connectedness of $\M,$ and a version of Hopf-Rinow for manifolds with boundary, any point of $\M$ and, in particular, any point of $\overline K$ can be connected to $\partial M$ by a geodesic $\gamma.$ Since $\dev$ is locally affine by step~(\ref{it:la}), it follows that $\bar\gamma = \dev \circ \gamma$ is a geodesic. Since $\overline L$ is convex, and the endpoints of $\bar \gamma$ belong to $\overline L,$ we conclude that $\bar \gamma \subset \overline L.$ Thus the length of $\bar \gamma$ is less than the diameter of $L.$ It follows that the length of $\gamma$ is also less than the diameter of $L.$ So, the distance from any point of $\overline K$ to  $\partial \M$ is bounded. Moreover, $\partial \M$ is compact by assumption~\ref{it:circ} and thus bounded. Therefore, $\overline K$ is bounded as claimed. Similarly, the fact that $\bar \gamma \subset \overline{L}$ implies that $\gamma \subset \overline{K}.$ So any point in $\overline K$ can be connected within $\overline K$ to a point of $\partial \M.$ Moreover, $\partial \M$ is connected by assumption~\ref{it:circ}. So $\overline K$ is connected as claimed.

\item $\overline{K}$ is compact because it is closed and bounded.

\item\label{it:union} It follows from step~(\ref{it:la}) and the Implicit Function Theorem that $\overline K$ is a manifold with boundary and
\[
\partial \overline K = \partial\M \cup \dev^{-1}(\partial L).
\]
In particular, $\partial \overline K$ is a union of circles.

\item $\M\setminus K$ is complete because it is a closed subset of a complete manifold.

\item It follows from the proof of Lemma~3.3 in Chapter 7 of \cite{DoC92} that
\[
\dev: \M\setminus K \to\R^2\setminus L
\]
is a covering map; denote by $d$ the degree of the covering.

\item $\partial(\M\setminus K)$ is closed and bounded hence compact.

\item It follows from the classification of covering spaces together with the previous step that $\M\setminus K$ is a union of annuli.

\item The second homology group $H_2(\M)$ vanishes by the following argument: The interior of $\M$ is non-compact, so Proposition 3.29 from~\cite{Hat02} implies its second homology is trivial. But $\M$ is homotopy equivalent to its interior, so its homology is the same.

\item It follows from the previous step and a Mayer-Vietoris argument that $H_1(K) \simeq \bbZ$.

\item By classification of surfaces, $K$ is a (single) annulus.

\item By the previous step, $\partial K$ is a union of two circles. So, by step~(\ref{it:union}) we have $\partial(\M\setminus K) \simeq S^1$.

\item Hence $\M\setminus K$ is an annulus.

\item Since $\M$ is the union of two annuli, $\M\setminus K$ and $K$, along their joint boundary, it follows that $\M$ is an annulus.

\item Let $\gamma$ be a simple closed curve in $\R^2\setminus L$. It has winding number one.

\item Let  $\gamma^d$ be the concatenation of $\gamma$ with itself $d$ times; it has winding number $d$.

\item Let $\tg$ be a lifting of $\gamma^d$ to $\M\setminus K$, i.e., $\dev\circ\tg = \gamma^d$. Then by covering space theory, $\tg$ is a simple closed curve that generates $\pi_1(M) \simeq \bbZ$. Moreover, since lifting preserves winding number, the winding number of $\tg$ is $d$.

\item By classification of simple closed curves on an annulus (compare with the case of the torus in Chapter~1 of~\cite{FM11}), every two non-contractible simple closed curves on a annulus are homotopic. By Proposition 1.10 of~\cite{FM11}, every two simple closed curves in a surface that are homotopic are also isotopic. Together with Assumption~5 it follows that $\tg$ is isotopic to a curve that has winding number one, hence $d=1$.

\item Since $\dev$ is a covering of degree one it is bijective. A bijective map that is locally affine is an affine embedding; this concludes the proof.

\end{enumerate}

\end{proof}
%%%%%%

%%%%%%%%%%%%%%%
\subsection{Point defects}

We next briefly consider point defects, i.e., manifolds in which the locus of the defect is a point, or a finite collection of points. In three dimensions, $\M$ is simply connected, which implies a trivial fundamental group, and therefore a trivial monodromy. The implication is that point defects cannot be detected in the same way as line defects, by making metric measurements around loops that encircle the defect.

Consider, for example, a point defect of type vacancy. The Volterra cut-and-weld procedure in this case would be to remove from $\R^3$ a ball of radius $a$ (say, centered at the origin) and weld the boundary of this ball into a single point. $\M$.
Here we take $\M=\R^3\setminus\{0\}$, which we parametrize using spherical coordinates,
\[
(R,\Theta,\Phi) \in (0,\infty)\times[0,\pi]\times[0,2\pi),
\]
and the metric is
\[
\g = dR\otimes dR + (R+a)^2\,d\Theta\otimes d\Theta + (R+a)^2\,\sin^2\Theta\, d\Phi\otimes d\Phi,
\]
The vacancy manifests in that the intrinsic geometry in its vicinity has ``too much length"; the surface area of spheres that converge to a single point---the locus of the defect---does not tend to zero.

Here too, like for the two-dimensional defects considered in the last subsection,  the trivial monodromy manifests itself in the fact that if a compact set that contains the locus of the defect is removed, the punctured manifold embeds isometrically in Euclidean space.

%%%%%%%%%%%%%%%%%%%%%%%%%%%%%%%%%
\section{Discussion}
\label{sec:discussion}

This paper is concerned with the description of singular defects in isotropic media. The geometric structure of such materials is fully encoded in their reference metric.
We showed that topological defects can be fully described by geometric fields that  only reflect  the metric structure. More specifically, topological defects are described by the affine structure induced by the locally Euclidean metric. In particular, dislocations are described by the affine structure without reference to torsion.

It is important to stress that our approach does not contradict former approaches.
As showed by Wang almost 50 years ago, non-Riemannian material connection arise naturally in media that exhibit discrete symmetries. Our statement is that the
``failure of parallelograms to be closed" can be fully captured in isotropic media by the translational part of the monodromy, which only depends on the postulated affine structure.
Note that
the moment that topological defects are encoded by a metric, they can be realized without the need to break any structure. For example, both disclinations and edge dislocations can be created by imposing a two-dimensional reference metric via differential swelling, as in \cite{KES07}, with a  swelling factor that is harmonic everywhere but at a finite number of points.

Another interesting observation is the relation between disclinations and dislocations as monopoles and dipoles of curvature charges, and their electrostatic analog. Like in the electrostatic analog, the dipole moment, is independent of the reference point if and only if the monopole moment vanishes, thus answering the question raised by Miri and Rivier \cite{MR02} about the coexistence of disclinations and dislocations. Yet, unlike in the electrostatic analog, the dipole moment (reflected by the translational part of the monodromy) can be calculated exactly, and not just asymptotically at infinity, by metric measurements around loops that encircle the locus of the defect.  In this respect, the nonlinear Laplace equation that connects  the metric to the Gaussian curvature turns out to be  ``simpler" than the linear Poisson equation in electrostatics.

Another  result of the present paper is that every two-dimensional metric that has trivial monodromy can be isometrically embedded in the plane, up to the possible need to exclude a compact subset. In  practical terms, this means that every defective plane in which the defect cannot be detected by metric measurements along a curve that encircles the locus of the defect can be embedded in Euclidean plane with metric distortions restricted to a compact set. This observation, which is relevant for example to  known defects in graphene, has immediate implications on the elastic energy associated with such defects.

Our paper is concerned with the description of defects, and does not present calculations of residual stresses in bodies with defects. To calculate stresses one needs a concrete model, e.g., a neo-Hookean solid as used in \cite{YG12}, with strain measured relative to the postulated reference metric.

\paragraph{Acknowledgements}
We are grateful to Leonid Polterovich and Eran Sharon for  fruitful discussions.

% BibTeX users please use
\bibliographystyle{spmpsci}
\bibliography{MyBibs}

\end{document}